\theoremstyle{plain}
\newtheorem*{theorem*}{Theorem}
\newtheorem{theorem}{Theorem}
\def\wo{\omega_{\sf{max}}}
\def\Z					{\mathbb Z}
\def\R					{\mathbb R}
\def\T					{T}
\def\iZ					{\in \mathbb Z}
\def\iR					{\in \mathbb R}
\def\e					{{e}}
\def\ind					{\mathbbmtt{1}}
\def\DE					{\stackrel{\rm{def}}{=}}
\def\barwo					{\bar{\omega}_0}
\def\TUS					{T_{\mathsf{US}}}
\def\TFD					{T_{\mathsf{FD}}}
\def\RUS					{\widetilde\gamma_{\mathsf{US}}}
\def\RUSopt				{\widetilde\gamma_{\mathsf{USopt}}}
\def\RFD					{\widetilde\gamma_{\mathsf{FD}}}
\def\lopt					{\lambda_\mathsf{opt}}
\def\l						{\left(}
\def\r						{\right)}
\def\ind					{\mathds{1}}
\def\etal					{\emph{et al}.~}
\def\eg					{\emph{e.g.~}}
\def\ie					{\emph{i.e.}}
\def\trmat					{\boldsymbol{\EuScript{T}}_{M} \rob{\widehat{\bar{\mathbf{r}}}}}
\def\tpmat					{\boldsymbol{\EuScript{T}}_{M} \rob{\mat{z}} }
\def\tymat					{\boldsymbol{\EuScript{T}}_{M} \rob{\widehat{\bar{\mathbf{y}}}}}
\def\txmat					{\boldsymbol{\EuScript{T}}_{M} \rob{\mat{x}} }
\def\setin					{\rob{ \eset{K-1}} + \rob{K-1}\mathbb{Z}}
\def\setout					{\rob{ \mset{K-1} \setminus \eset{K-1}} + \rob{K-1}\mathbb{Z}}
\newcommand\mset[1]		{\mathbb{I}_{#1}}
\newcommand\eset[1]		{\mathbb{E}_{P,#1}}
\newcommand\mcal[1]		{\mathcal{#1}}
\newcommand\rob[1]			{\l #1 \r}
\newcommand\fig[1]			{Fig.~\ref{#1}}
\newcommand{\sqb}[1]		{\left[ #1 \right]}
\newcommand{\ft}[1]			{\left[\kern-0.15em\left[#1\right]\kern-0.15em\right]}
\newcommand{\fe}[1]		{\left[\kern-0.30em\left[#1\right]\kern-0.30em\right]}
\newcommand{\flr}[1]		{\left\lfloor #1 \right\rfloor}
\newcommand{\MO}[1]		{\mathscr{M}_\lambda ({#1} )}
\newcommand{\MONI}[1]		{{\widetilde{\mathscr{M}}}_\lambda ({#1} )}
\newcommand{\VO}[1]		{\varepsilon_{#1}}
\newcommand{\RO}[1]		{\mathscr{R}_{#1}}
\newcommand{\dft}[2]		{\hat{\bar{#1}}\sqb{#2}}
\newcommand{\mdft}[1]		{\hat{\bar{\mathbf{#1}}}}
\newcommand{\dftn}[2]		{\hat{{#1}}_{#2}}
\newcommand{\poly}[1]		{\mathsf{P}_M \rob{#1}}
\newcommand{\BL}[1]		{#1 \in \mathcal{B}_{\Omega}}
\newcommand{\EQc}[1]		{\stackrel{(\ref{#1})}{=}}
\newcommand{\EqRA}[1]		{\stackrel{(\ref{#1})}{\Longrightarrow}}
\newcommand{\vpp}[1]		{{#1}~\mathrm{V}_{\rm{pp}}}
\newcommand{\mat}[1]		{\mathbf{#1}}
\newcommand{\DR}[1]		{\rho_{#1}}
\newcommand{\mse}[1]		{\EuScript{E}{\rob{\gamma,#1}}}
\newcommand{\ep}[1]		{\times 10^{#1}}
\renewcommand\bar\underline
\renewcommand\hat\widehat
\renewcommand\geq\geqslant
\renewcommand\leq\leqslant
\def\moverlay{\mathpalette\mov@rlay}
\def\mov@rlay#1#2{\leavevmode\vtop{%
   \baselineskip\z@skip \lineskiplimit-\maxdimen
   \ialign{\hfil$\m@th#1##$\hfil\cr#2\crcr}}}
\newcommand{\charfusion}[3][\mathord]{
    #1{\ifx#1\mathop\vphantom{#2}\fi
        \mathpalette\mov@rlay{#2\cr#3}
      }
    \ifx#1\mathop\expandafter\displaylimits\fi}
\newcommand{\cupdot}{\charfusion[\mathbin]{\cup}{\cdot}}
\begin{document}

\title{Unlimited Sampling from Theory to Practice:\\ Fourier-Prony Recovery and Prototype ADC}

\author{Ayush~Bhandari, Felix~Krahmer and Thomas Poskitt

\thanks{A.~Bhandari's work is supported by the UK Research and Innovation council's \emph{Future Leaders Fellowship} program ``Sensing Beyond Barriers'' (MRC Fellowship award no.~MR/S034897/1) and the \emph{European Partners Fund}. F.~Krahmer acknowledges support by the German Science Foundation (DFG) in the context of the collaborative research center TR 109.}
\thanks{A.~Bhandari and T.~Poskitt are with the Dept. of Electrical and Electronic Engineering, Imperial College London, South Kensington, London SW7 2AZ, UK. (Emails: \texttt{ayush@alum.mit.edu} and \texttt{thomas.poskitt15@imperial.ac.uk})}
\thanks{F.~Krahmer is with the Dept. of Mathematics, TU Munich, Boltzmannstra{\ss}e 3, 85748 Garching, Germany. (Email: \texttt{felix.krahmer@tum.de})}

\thanks{Manuscript Submitted: 20XX.}}

\markboth{\sf{Manuscript}}%
{AB \MakeLowercase{\textit{et al.}}: Unlimited Sampling from Theory to Practice}

\maketitle

\begin{abstract}  
Following the Unlimited Sampling strategy to alleviate the omnipresent dynamic range barrier, we study the problem of recovering a bandlimited signal from point-wise modulo samples, aiming to connect theoretical guarantees with hardware implementation considerations. Our starting point is a class of non-idealities that we observe in prototyping an unlimited sampling based analog-to-digital converter. To address these non-idealities, we provide a new Fourier domain recovery algorithm. Our approach is validated both in theory and via extensive experiments on our prototype analog-to-digital converter, providing the first demonstration of unlimited sampling for data arising from real hardware, both for the current and previous approaches. Advantages of our algorithm include that it is agnostic to the modulo threshold and it can handle arbitrary folding times. We expect that the end-to-end realization studied in this paper will pave the path for exploring the unlimited sampling methodology in a number of real world applications. 
\end{abstract}
\begin{IEEEkeywords}
Analog-to-digital, modulo, non-linear reconstruction, Shannon sampling, Prony's method, super-resolution.
\end{IEEEkeywords}

\tableofcontents

\IEEEpeerreviewmaketitle

\newpage

\section*{Frequently Used Symbols}

\begin{table}[!h]
\normalsize
\centering
\begin{tabular}{p{0.1\columnwidth}p{0.8\columnwidth}}
Symbol & Definition \\ 
$\lambda$       				& Analog-to-digital converter (ADC) threshold.   	     			\\
$\mathscr{M}_{\lambda}$       	& Centered modulo non-linearity.			        				\\
$\widetilde{\mathscr{M}_{\lambda}}$ & Generalized or non-ideal modulo non-linearity.			\\
$t_m$					& Folding instant introduced by $\widetilde{\mathscr{M}_{\lambda}}$. \\
$\mathcal{B}_\Omega$ 		& Space of $\Omega$-bandlimited functions.					\\
$g\rob{t}$       				& Continuous-time, $\Omega$-bandlimited function.				\\
$\gamma\sqb{k}$    			& Point-wise samples of a bandlimited function.            			\\
$y\sqb{k}$    				& Modulo samples of a bandlimited function.            				\\
$\VO{g}\rob{t}$				& Simple function taking values on a $2\lambda$-grid. 			\\
$\RO{g}\rob{t}$				& Simple function taking values on a general grid.		 			\\
$\Delta^N$       				& Finite-difference operator of order $N$.						\\
$\bar{\gamma} \sqb{k}$		& First order finite-difference of $\gamma\sqb{k}$.				\\
$\widehat{g}_p$ 		       	& Fourier series coefficient of function $g\rob{t}$.		          	\\
$\mat{V}$				       	& Discrete Fourier Transform (DFT) matrix.			          	\\
 $\dft{y}{n}$				& Sampled or discrete Fourier transform (DFT).					\\
$\mset{K}$				& Set of $K$ contiguous integers from $0$ to $K-1$.			 	\\
\end{tabular}%
\end{table}

\newpage

\linespread{1}
\section{Introduction}

\IEEEPARstart{I}{n} the recent line of work \cite{Bhandari:2017b,Bhandari:2020g,Bhandari:2020f}, the authors introduced the {\bf Unlimited Sensing Framework} (USF). The USF allows for the acquisition of signals that are orders of magnitude larger than the dynamic range of the analog-to-digital converter (ADC) used in the sampling process. Suppose that an ADC can measure up to $2\lambda$ volts (peak-to-peak), then any signal with maximum amplitude larger than $\lambda$ would result in clipped or saturated samples, for which the Nyquist-Shannon sampling theory is no longer applicable. In contrast, the USF exploits a co-design of hardware and algorithms to allow for \emph {high-dynamic-range} (HDR) signal recovery beyond the threshold of $\lambda$.

\begin{enumerate}[leftmargin =*, label = $\bullet$]
\item On the hardware side, a continuous-time signal is folded via a modulo non-linearity before it is sampled. In \fig{fig:MTCS}, we show an oscilloscope screenshot of the HDR input and the modulo-folded output, obtained via our hardware prototype. This hardware is later used in Section \ref{sec:exp} of the paper to validate the theory presented in this work. 
\item On the algorithmic side, one needs to solve the ill-posed inverse problem of recovering a signal from folded measurements. The solution approach of \cite{Bhandari:2017b,Bhandari:2020g,Bhandari:2020f}, which capitalizes on certain commutativity properties of the modulo non-linearity, and the associated reconstruction guarantees are reviewed in the next subsection.
\end{enumerate}

\begin{figure}[!b]
\centering
\includegraphics[width = 0.65\columnwidth]{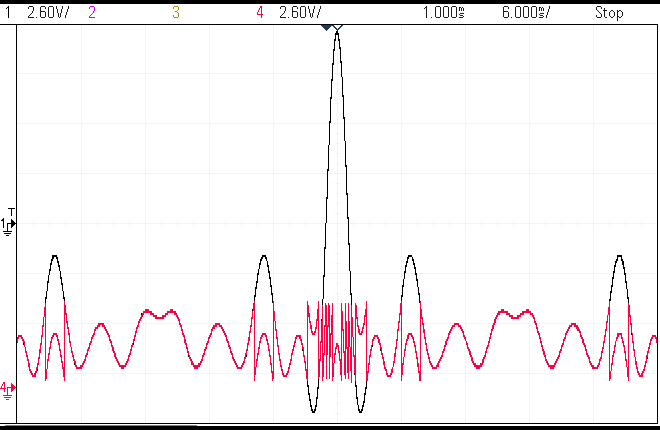}
\caption{Hardware demonstration for \emph{unlimited sampling} \cite{Bhandari:2017b,Bhandari:2020g,Bhandari:2020f}. We show an oscilloscope screen shot plotting a continuous-time function (ground truth, black) and its folded version (pink). The input signal with dynamic range $20 \mathrm{V}$ peak-to-peak ($\approx 10\lambda$) is folded into a $4.025 \rm{V}$ peak-to-peak, signal. We have tested sampling and recovery of signals as large as $24\lambda$. A live {YouTube} demonstration is available at \href{https://youtu.be/prV40WlzHh4}{\texttt{https://youtu.be/prV40WlzHh4}}.}
\label{fig:MTCS}
\end{figure}

\subsection{Overview of Unlimited Sampling and Reconstruction} 
When working with folded signals as in \fig{fig:MTCS}, the following result shows that a bandlimited function can be recovered from a constant factor oversampling of its modulo samples.

\begin{theorem}[Unlimited Sampling Theorem \cite{Bhandari:2017b}]
\label{thm:UST}
Let $f\l t \r$ be a continuous-time function with maximum frequency $\wo$ (rads/s). Then, a sufficient condition for recovery of $f\l t \r$ from its modulo samples (up to an additive constant) taken every $T$ seconds apart is $T\leqslant 1/ \l 2 \wo \e\r$ where $\e$ is Euler's constant.
\end{theorem}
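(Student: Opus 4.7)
The plan is to exploit two complementary ingredients: (i) an algebraic identity that makes the modulo operator essentially commute with the $N$-th finite difference $\Delta^N$, and (ii) a Bernstein-type decay estimate showing that high-order finite differences of bandlimited samples become arbitrarily small when the sampling period is fast enough.

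First I would decompose the unfolded samples as $\gamma\sqb{k} = y\sqb{k} + \VO{f}\sqb{k}$, where $\VO{f}\sqb{k} \in 2\lambda \mathbb{Z}$ encodes the unknown integer number of foldings at the $k$-th sample. Applying $\Delta^N$ yields $\Delta^N \gamma\sqb{k} = \Delta^N y\sqb{k} + \Delta^N \VO{f}\sqb{k}$, and the crucial observation is that $\Delta^N \VO{f}\sqb{k}$ still lies in $2\lambda \mathbb{Z}$ because finite differencing preserves the lattice structure. Hence $\mathscr{M}_\lambda \rob{\Delta^N y\sqb{k}} = \mathscr{M}_\lambda \rob{\Delta^N \gamma\sqb{k}}$, so after differencing the modulo operator interacts only with the bandlimited part.

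Second, I would establish an estimate of the form $\|\Delta^N \gamma\|_\infty \leq \rob{2 \wo T \e}^N \|f\|_\infty$, by comparing the $N$-th finite difference with the $N$-th derivative (via Taylor expansion and a binomial identity), bounding the derivative using Bernstein's inequality $\|f^{(N)}\|_\infty \leq \wo^N \|f\|_\infty$, and using Stirling's formula to pull out the factor $\e^N$ from the combinatorial terms. Under the hypothesis $T \leq 1/\rob{2 \wo \e}$, the base $2 \wo T \e \leq 1$, so by taking $N$ large enough (depending on $\|f\|_\infty / \lambda$), one forces $|\Delta^N \gamma\sqb{k}| \leq \lambda$ uniformly in $k$. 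This pins down $\mathscr{M}_\lambda \rob{\Delta^N \gamma\sqb{k}} = \Delta^N \gamma\sqb{k}$, so that $\Delta^N \gamma\sqb{k}$ is recovered exactly from the modulo samples.

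Third, I would reconstruct $\gamma\sqb{k}$ from $\Delta^N \gamma\sqb{k}$ by applying the anti-difference (summation) operator $N$ times. Each stage introduces an unknown additive constant, but the residual $\VO{f}$ living on the lattice $2\lambda \mathbb{Z}$ confines the ambiguity at every stage to a known grid; rounding the recovered sequence to the nearest multiple of $2\lambda$, combined with the boundedness of bandlimited samples inherited from $\|f\|_\infty < \infty$, should identify the correct constant at each level. Once $\gamma\sqb{k}$ is known up to a single additive constant, Shannon interpolation yields $f\rob{t}$ up to the same constant, matching the statement.

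The main obstacle I expect is the third step. Anti-differencing is inherently unstable: a naive $N$-fold summation produces not only one integration constant but a polynomial drift of degree up to $N-1$, and one must argue that the lattice structure of $\VO{f}$ kills this drift at each level rather than letting the errors compound. Designing a rounding-based identification of the per-level constants, and proving it is consistent with the bandlimited decay of $\gamma$, is the delicate bookkeeping in the argument; by contrast, the first two steps reduce to algebra plus a careful appeal to Bernstein's inequality.
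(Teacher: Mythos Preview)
Your proposal is correct and follows essentially the same route as the argument this paper reviews from \cite{Bhandari:2017b,Bhandari:2020f}: the modulo decomposition $\gamma = y + \VO{f}$ with $\VO{f}\in 2\lambda\mathbb{Z}$, the identity $\MO{\Delta^N y}=\MO{\Delta^N\gamma}$, the Bern\v{s}te\u{\i}n-type bound $\|\Delta^N\gamma\|_\infty \le (T\Omega\e)^N\|f\|_\infty$ to force $\Delta^N\gamma$ into $[-\lambda,\lambda)$, and then the stagewise anti-difference where the $2\lambda\mathbb{Z}$ lattice structure of $\VO{f}$ together with bandlimited boundedness pins down the integration constants. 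You have also correctly identified the delicate step---controlling the degree-$(N{-}1)$ polynomial ambiguity in the kernel of $\Delta^N$---which is precisely the ``synergistic interplay'' the paper highlights in items (i) and (ii) of its review.
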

Note that the sampling criterion is independent of $\lambda$ and only depends on the bandwidth of the signal. This is somewhat surprising and indeed the reconstruction becomes less stable with respect to noise for large amplitudes \cite{Bhandari:2020f}. Also, the sampling rate required for local reconstruction from a finite set of modulo samples grows in proportion to the dynamic range of the signal, relative to $\lambda$, see for example the \emph{local reconstruction theorem} discussed in \cite{Bhandari:2018b,Bhandari:2018c}.

\medskip

{{\noindent {\bf How does the Unlimited Sampling algorithm \cite{Bhandari:2020f} work?}}} Let us formally define the centered modulo operation using the mapping 
\begin{equation}
\label{map}
\mathscr{M}_{\lambda}:f \mapsto 2\lambda \left( {\fe{ {\frac{f}{{2\lambda }} + \frac{1}{2} } } - \frac{1}{2} } \right), 
\quad \ft{f} \DE f - \flr{f} 
\end{equation}
where $\ft{f}$ and $\flr{f}$ define the fractional part and floor function, respectively. The recovery procedure for ``inverting'' the $\MO{\cdot}$ operator relies on two steps, (i) isolating the higher order finite differences followed by, (ii) stably inverting the difference operator. The modulo decomposition property (cf.~ Fig.~4 in \cite{Bhandari:2020f}) allows us to write, 
\begin{equation}
\label{MDP}
g\rob{t} = \MO{g\rob{t}} + \VO{g}\rob{t},\qquad \VO{g} \in 2\lambda\mathbb{Z}
\end{equation}
where $\VO{g}$ is a simple function. Let us fix $\BL{g}$ where $\mathcal{B}_\Omega$ denotes the space of $\Omega$-bandlimited functions. We denote the conventional and modulo samples by, $\gamma\sqb{k} = g\rob{kT}$ and $y\sqb{k} =\MO{g\rob{kT}}$, respectively.
Let ${\Delta ^N}y = {\Delta ^{N - 1}}\left( {\Delta y} \right)$ denote the $N^{\rm{th}}$ difference operator with $\rob{\Delta y}\sqb{k} = y\sqb{k+1}-y\sqb{k}$. Since ${\Delta ^N}\VO{g} \in 2\lambda\Z $, it follows that, 
\begin{equation}
\label{eq:HOME}
\MO{\Delta^N\VO{g}\rob{kT}} = 0 \ \EqRA{MDP} \  \MO{\Delta^N \gamma} = \MO{\Delta^N y}.
\end{equation}
Oversampling $\BL{g}$ results in highly correlated samples and hence, $\rob{\Delta^N\gamma}\sqb{k}$ shrinks as the sampling step $T$ decreases. Quantitatively, the shrinking effect is explained by the bound, $\|\Delta^N \gamma\|_\infty \leq {\left( {\T \Omega \e} \right)^N} \| g\|_\infty$ in \cite{Bhandari:2020f} where $\| \cdot\|_\infty$ denotes the max-norm. 
For a suitable $N$ (cf.~\cite{Bhandari:2020f}), namely,
\begin{equation}
\label{eq:NO}
N^\star \geq \left\lceil {\frac{{\log \lambda  - \log \beta_g}}{{\log \left( {T\Omega \e} \right)}}} \right\rceil, \quad \beta_g \in 2\lambda\Z \mbox{ and } \beta_g\geq \| g\|_\infty,
\end{equation}
choosing $T\leq 1/\Omega\e$ ensures that $\|\Delta^{N^\star} \gamma\|_\infty \leq \lambda$. Modular arithmetic shows that\footnote{See Proposition 2 in \cite{Bhandari:2020f}.} for any sequence $s\sqb{k}$, it holds that 
\begin{equation}
\label{eq:modcom}
{\mathscr{M}_{\lambda}(\Delta^N s) =\mathscr{M}_{\lambda}(\Delta^N({\mathscr{M}_{\lambda}(s)})}.
\end{equation}
By choosing $N = N^\star$, we have, 
\begin{equation}
\label{eq:US}
T = \TUS\leq 1/\Omega\e \ \EqRA{eq:modcom} \ \Delta^{N^\star}  \gamma = \MO{\Delta^{N^\star} y}.
\end{equation}
The unlimited sampling based recovery algorithm \cite{Bhandari:2017b} recovers $\gamma$ from $\Delta^{N^\star}\gamma$ and is also stable with respect to quantization noise \cite{Bhandari:2020f}. The reconstruction works by estimating $\Delta^{\rob{N^\star-n}} \VO{\gamma}$ for $n=\sqb{0,N^\star}$ thus yielding $\VO{\gamma}$, and finally, $\gamma = y + \VO{\gamma}$, is the recovered signal. The approach in \cite{Bhandari:2017b,Bhandari:2020f}, 
\begin{enumerate}[leftmargin =15pt, label = {\upshape(\roman*)}]
  \item is inherently stable because we are able to exploit the restriction on the range of $\VO{\gamma}$, that is,  $\VO{\gamma} \in 2\lambda\mathbb{Z}$.
  \item exploits properties of bandlimited functions\footnote{
  In praticular, we use Bern\v{s}te\u{\i}n's inequality. This allows for recovery of the unknown polynomial in the kernel of $\Delta^N$, up to a constant.} to estimate the ``unknown constant of integration'' for inversion of $\Delta^N$.  
\end{enumerate}
It is precisely this synergistic interplay between (i) and (ii) that allows our approach to treat orders $N>1$,  distinguishing it with seemingly similar methods\footnote{The key difficulty is that each time the operator $\Delta$ is inverted, an unknown constant in the kernel of $\Delta$ has to be estimated. 
Also for $N=1$, our method is considerably more stable as these approaches do not capitalize on the stabilizing effect described in (i).
}  
\cite{Rieger:2009} analogous to Itoh's method 
for phase unwrapping \cite{Itoh:1982}, that are restricted to low orders. For further details and comparisons, see \cite{Bhandari:2020f}. Recovery with higher orders has clear advantages in the context of applications such as HDR imaging \cite{Bhandari:2020}, tomography \cite{Bhandari:2020a,Beckmann:2020} and sensor array processing \cite{FernandezMenduina:2020,FernandezMenduina:2020a}.

\subsection{Related Work} 
Following our work \cite{Bhandari:2017b}, the problem of sampling and reconstruction of bandlimited and smooth functions, from modulo samples, has been studied in various contexts. Ordentlich \etal \cite{Ordentlich:2018} studied recovery from quantized modulo samples, using side-information, from a rate-distortion perspective. In parallel papers, \cite{Bhandari:2019a} and \cite{Romanov:2019} proved that bandlimted functions are uniquely characterized by modulo samples, when sampling rate is above the Nyquist rate. In \cite{Romanov:2019}, the authors also present a constructive approach, provided that a subset of unfolded samples is known. Unlimited sampling has also been studied in the context of finite-length signals \cite{Bhandari:2018b,Bhandari:2018c}, random measurements of sparse signals \cite{Musa:2018}, one-bit \cite{Graf:2019} and multi-channel sampling \cite{Gan:2020} as well as wavelet based reconstruction approaches \cite{Rudresh:2018}. Recovery guarantees for denoising of modulo samples with bounded and Gaussian noise models were discussed in \cite{Cucuringu:2018}. 

Recently, we have have developed USF based recovery methods that are tailored to larger classes of signal spaces and inverse problems. For instance, recovery of multi-dimensional functions on arbitrary lattices was considered in \cite{Bouis:2020}. Functions that belong to spline spaces (\eg images) were studied in \cite{Bhandari:2020}. The modulo Radon transform was introduced in \cite{Bhandari:2020a} and its application to HDR tomography was presented in \cite{Beckmann:2020}. Computational sensor array signal processing based on USF was presented in \cite{FernandezMenduina:2020,FernandezMenduina:2020a}.    

\begin{figure*}[!t]
\centering
\includegraphics[width = 1\textwidth]{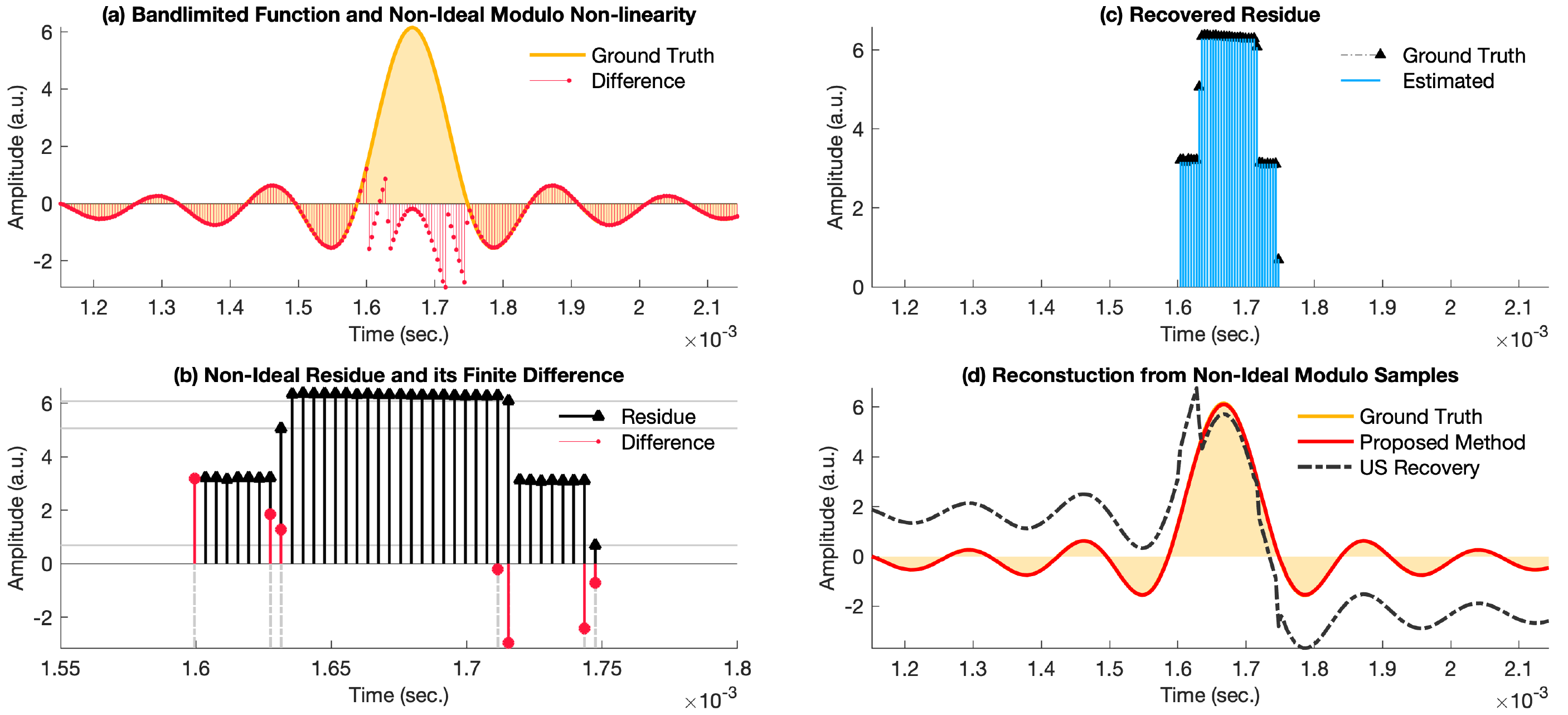}
\caption{Example of reconstruction based on data acquired using our prototype US-ADC. Ground truth is shown in shaded yellow. (a) Non-ideal modulo mapping. (b) Unlike in the perfect modulo case, the residue function, or $g-\MO{g}$, no longer lies on a grid of $2\lambda\Z$, specially at the locations marked by gridlines on the y-axis. The first order difference encodes both non-ideal folds and inaccuracies in terms of a sparse representation. (c) By exploiting the sparse structure of the residue, our algorithm that is agnostic to $\lambda$ is able to recover the residue precisely. (d) Since $g-\MO{g}\not\in2\lambda\Z$, \eqref{eq:HOME} does not apply and unlimited sampling based recovery \cite{Bhandari:2020f} yields an erroneous reconstruction. In contrast, our Fourier domain approach based reconstruction is agnostic to $\lambda$ and results in a near perfect reconstruction.}
\label{fig:demo1}
\end{figure*}

\subsection{Motivation and Contributions} 
The work presented in this paper is pivoted around the practical aspects of unlimited sampling and the insights developed from building a prototype USF based ADC. 

\begin{enumerate}[leftmargin=*]
\itemsep 5pt
  \item {\bf Non-Ideal Folding.} When implementing the modulo circuit in hardware {(see Section~\ref{sec:exp} for more details)}, we observed that it occasionally exhibits non-ideal foldings. An example output of the circuit with such artifacts is shown \fig{fig:demo1}(a). More precisely, in \fig{fig:demo1}(a) some of the folding times are delayed; other types of non-ideal foldings that we have observed includes spurious jumps and inaccuracies in the folding threshold $\lambda$. In all these cases, the residue $\VO{g}$ in \eqref{MDP} {is still piecewise constant}, see for example \fig{fig:demo1}(b), but no longer satisfies $\VO{g} \in 2\lambda\mathbb{Z}$. The consequence is that $\MO{\Delta^N\VO{g}\rob{kT}} \not = 0$ in \eqref{eq:HOME} and the reconstruction via \eqref{eq:US} is erroneous, see~\fig{fig:demo1}(c)). We attribute the artifacts to electronic limitations such as the maximum rate at which an input signal can be folded. As this rate depends on the bandwidth of the input signal, $\wo$, an exact implementation must be carefully calibrated to match $\wo$ in Theorem \ref{thm:UST}. In this paper we show that this calibration is not required and rather, such limitations can be circumvented algorithmically.
  
  \item {\bf Lower Sampling Rates.} While $T\Omega \e \leq 1$ guarantees that \eqref{eq:US} holds for some $N$, in practice, for stability reasons, it is desirable that we work with smaller values of the finite-difference order $N$. To satisfy  \eqref{eq:NO}, one then needs higher oversampling. Thus a natural question is whether alternative recovery approaches can allow for recovery with both moderate oversampling and low values of $N$. 
\end{enumerate} 

\medskip
\noindent{\bf Contributions:}
{The main contribution of this work is to provide the first validation the unlimited sampling approach, thus taking our theoretical ideas all the way to practice. To do so, we go beyond the conventional literature in fundamental ways. On the theory front, we propose a novel, Fourier domain, recovery algorithm that can handle non-idealities and uncertainties introduced by the hardware, while operating at lower sampling rates. On the hardware end, we develop a custom-designed, USF based prototype ADC, the \emph{US-ADC}. This is the key to enabling real experiments. Extensive hardware experiments based on the US-ADC, corroborate the effectivity of our new recovery approach. The upshot of our \emph{end-to-end} sensing pipeline is that we can recover signals as large as $24\times$ the ADC threshold $\rob{\lambda}$. Furthermore, our work also validates the first approaches presented in \cite{Bhandari:2020f}.} 

Concretely, the advantages of our recovery method include that it
\begin{enumerate}[leftmargin=*,label = $\arabic*)$]
  \item is agnostic to $\lambda$ and hence, can combat any non-idealities. 
  \item requires computation of $\Delta^1$ only; this is specially beneficial in the case of errors. {The approach in \cite{Bhandari:2020f} requires computation of $\Delta^N$ which can be sensitive to hardware artifacts.}
\end{enumerate}
The precise signal model that we are working with consists of periodic, bandlimited signals, \ie, trigonometric polynomials of finite degree. Note that this model is more restrictive than the infinite dimensional model considered in previous works; this restriction, however, reflects the practical limitation that one typically samples signals on a finite interval rather than the full real line.

\medskip
\noindent {\bf Notation.} The sets of real, integer, and complex-valued numbers are denoted by $\mathbb{R}$, $\mathbb{Z}$ and $\mathbb{C}$, respectively. We use $\mset{K} = \{0,\ldots,K-1\}, K\iZ^+$ to denote the set of $K$ contiguous integers while its continuous counterpart is denoted by $\ind_{\mcal{X}}\rob{t}, t\iR$, the indicator function on the domain $\mcal{X}$. For a $\tau$-periodic function $h$, we consider the {renormalized} Fourier series coefficients as given by $\dftn{h}{m} =\int_{0}^{\tau} h\rob{t}e^{-\jmath m\omega_0 t} dt$ where $\omega_0 = 2\pi/\tau$ is the fundamental harmonic.  Vectors and matrices are written in bold fonts. The mean squared error or MSE between vectors $\mat{x}$ and $\mat{y}$ of length $K$ is defined by 
\begin{equation}
\label{eq:MSE}
\EuScript{E}\rob{\mat{x},\mat{y}} \DE \frac{1}{K}\sum\limits_{k=0}^{K-1} \vert x\sqb{k} - y\sqb{k} \vert^2.
\end{equation}

\section{Fourier Domain Reconstruction Approach}

\noindent{\bf Signal Model.} In our work, we consider $\BL{g}$ such that $g\left( t \right) = g\left( {t + \tau } \right),\forall t \in \mathbb{R}$. Such signals can be written as,
\begin{equation}
\label{gfs}
g\left( t \right) = \sum\limits_{\left| p \right| \leqslant P} {\hat{g}_p}
{e^{\jmath p {\omega _0}t}} ,\quad {\omega _0} = \frac{{2\pi }}{\tau }, \quad P = 
\left\lceil {\frac{\Omega }{{{\omega _0}}}} \right\rceil 
\end{equation}
where $\widehat{g}_p$ denotes the Fourier series coefficient and $\hat{g}_{-p} = \widehat {g}^*_p$ (Hermitian symmetry) with ${\sum\limits_p {\left| {\hat g_p} \right|} ^2} < \infty$. Sampling $g\rob{t}$ with sampling rate $T$ results in $K$ samples $\gamma\sqb{k}$ on the interval $\left[ {0,\tau } \right)$. To solve for $\hat{g}_p$ in \eqref{gfs}, one requires $K\geq2P+1$. Wen working with oversampled representation, we write $\dftn{g}{p}$ in the Fourier domain as, 
\begin{equation}
\label{eq:osdft}
\dftn{g}{p} =
\frac{1}{\tau}
\begin{cases}
\int_{0}^{\tau} g\rob{t} e^{-\jmath \omega_0 p t} , \quad & p \in \eset{K} \\
0 ,\quad &p \in \mset{K} \setminus \eset{K}
\end{cases},
\end{equation}
where the set $\eset{K}$ is given by, 
\begin{equation}
\label{eset}
\eset{K} = \left[ {0,P} \right] \cup \left[ {K - P,K - 1} \right], \ \ \hfill |\eset{K}| = 2P+1.
\end{equation}
With $KT = \tau$, the critical sampling rate for a $P = \left\lceil {\Omega /{\omega _0}} \right\rceil$ bandlimited function is $T\leq\tau/\rob{2P+1}$.

\medskip

\noindent {\bf Generalized Modulo Non-linearity.} Our \emph{non-ideal} modulo non-linearity, $\MONI{\cdot}$, corresponds to subtracting a piecewise constant function, with finite discontinuities, from $\BL{g}$. In formulas, $\MONI{\cdot}$ admits a representation of the form,
\begin{equation}
\label{eq:monis}
g\rob{t} = \MONI{g\rob{t}} + \RO{g}\rob{t}, \quad t \in [0,\tau)
\end{equation}
where the (generalized) residue function is of the form 
\begin{equation}
\label{eq:res}
\RO{g}\rob{t} = \sum\limits_{m \in \mcal{M}} {c\sqb{m}{\ind_{{\mathcal{D}_m}}}\left( t \right)}, 
\ \ 
\cupdot_m \ind_{{\mathcal{D}_m}} = \R,
\ \ 
c\sqb{m}\iR.
\end{equation}
Unlike in \cite{Bhandari:2020f}, we make no assumptions on the coefficients $c\sqb{m}$ related to the residue function $\RO{g}\rob{t}$. Note that $\MONI{\cdot}$ includes the exact modulo operation as a special case and hence our algorithm is backwards compatible with the sampling model in \cite{Bhandari:2020f}.

\subsection{Recovery Approach}

\begin{figure*}[!t]
\centering
\includegraphics[width = 1\textwidth]{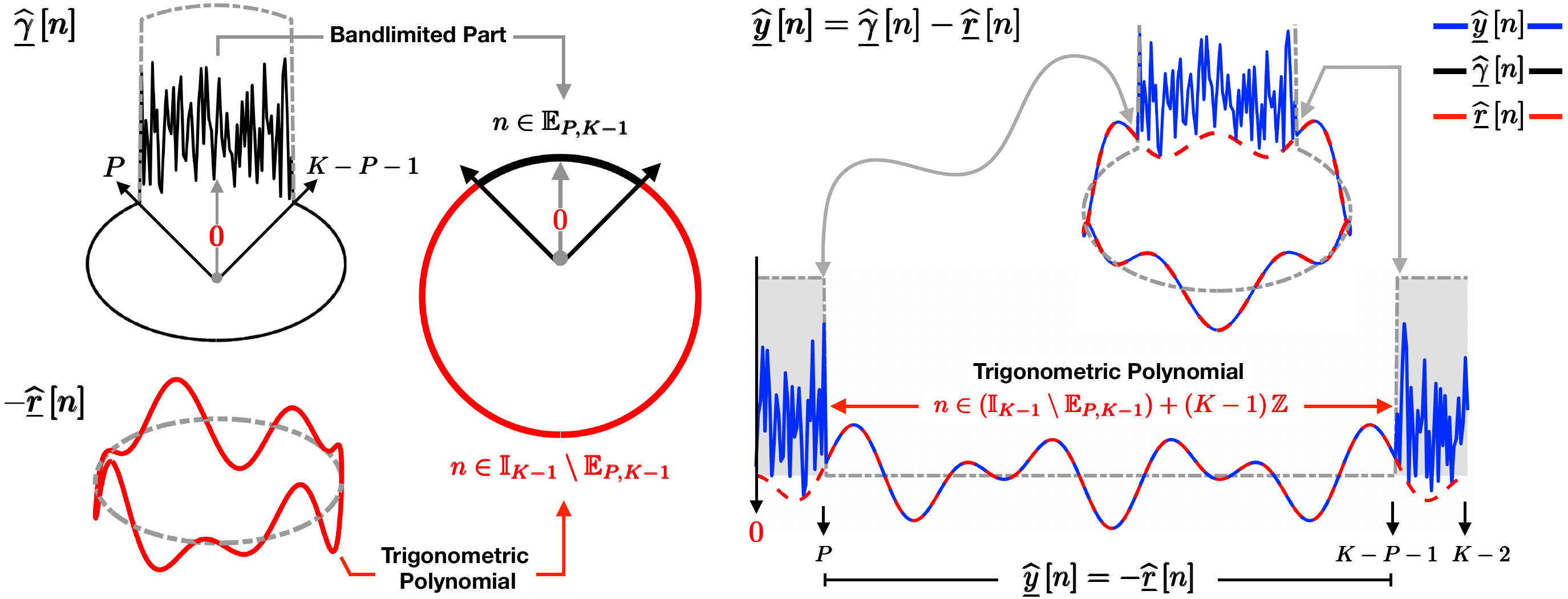}
\caption{Fourier domain partitioning of the bandlimited signal and the non-ideal folding instants given by \eqref{eq:ydft}.}
\label{fig:dft}
\end{figure*}

Letting $\delta$ denote the Dirac distribution and
\begin{equation}
\label{eq:barvec}
\bar y \sqb{k} \DE \Delta y \sqb{k}, \ \ 
\bar{\gamma} \sqb{k}  \DE \Delta \gamma \sqb{k},  \ \  \mbox{and} \ \ 
\bar{r} \DE \Delta\RO{\gamma}\rob{kT},
\end{equation}
respectively, we note that \eqref{eq:monis} implies 
\begin{align}
k\in \mset{K},\ \ \bar y \sqb{k} 	& =  \bar\gamma \sqb{k} - \bar r \sqb{k}  \notag \\
			& = \bar\gamma \sqb{k} - \sum\limits_{m \in \mcal{M}} {c\sqb{m}\delta \left( {kT - {t_m}} \right)}
\label{eq:spike}			
\end{align}
where $t_m \in \rob{T\Z}\cap \left[0,\tau\right)$ are the \emph{unknown} folding instants. The size of the set $\mcal{M}$ depends on the dynamic range of the signal relative to the threshold $\lambda$. Since $c\sqb{m} \not\in 2\lambda\Z$, the contribution due to $\RO{g}$ cannot be removed by non-linear filtering of the amplitudes using a modulo non-linearity as in \eqref{eq:HOME}, which is a key ingredient of the method in \cite{Bhandari:2020f}. A classical alternative would be to filter out the ``impulsive'' component $\bar r \sqb{k} = \rob{\Delta\RO{\gamma}}\sqb{k}$ using the median filter. However, this yields an inexact solution due to the nature of the median filter {and is not considered in this paper}.

Instead, given $K$ modulo samples $y\sqb{k}, k\in\mset{K}$, we exploit the fact that \eqref{eq:spike} can be partitioned in the Fourier domain as, 
\begin{align}
\label{eq:ydft}
\dft{y}{n} & = 
\begin{cases}
\dft{\gamma}{n} - \dft{r}{n}
& n\in \setin \\ 
- \dft{r}{n}
& n \in \setout
\end{cases}
\end{align}
where $\dft{y}{n}$, the sampled or discrete Fourier transform (DFT) of $y\sqb{k}$, is given by, 
\begin{equation}
\label{eq:dft}
\dft{y}{n} \DE \sum\limits_{k \in \mset{K-1}} {\bar{y}\sqb{k}{e^{ - \jmath \barwo nk}}}, \quad \barwo = \frac{2\pi}{K-1}.
\end{equation}
The Fourier domain partition in terms of the DFT coefficients distributed over the two sets, 
\begin{itemize}
  \item $n \in \setin $ (bandlimited part) and, 
  \item $ n \in \setout$ (modulo part)
\end{itemize}
is schematically explained in \fig{fig:dft}. In vector-matrix notation, we can re-write \eqref{eq:dft} as, 
\begin{equation}
\label{eq:dftvec}
\mdft{y} \DE \mat{V} \mat{\bar{y}}, \qquad 
{\left[ \mat{V} \right]_{n,k}} = {e^{ - \jmath \left( {\frac{{2\pi }}{{K - 1}}} \right)nk}}, \qquad 
k\in\mset{K-1}
\end{equation}
where $\mat{V} \in \mathbb{C}^{\rob{K-1}\times\rob{K-1}}$ is the DFT matrix\footnote{We remind the reader that starting with $K$ modulo samples, the operation $ \bar y \sqb{k} = \Delta y \sqb{k}$ results in a loss of sample and hence the square matrix $\mat{V}$ has a dimension (or rank) of $K-1$.} and $\mdft{y}$ is the DFT vector corresponding to the first-order finite-difference samples $\mat{\bar{y}}$. In \eqref{eq:ydft} and also as shown in \fig{fig:dft}, we have $\bar{y}\sqb{n} = - \bar{r}\sqb{n}$
, $\forall n \in { \mset{K-1} \setminus \eset{K-1}}$ and this simplifies to a sum of complex exponentials,
\begin{align}
\dft{r}{n}  &= \sum\limits_{k\in\mset{K-1}} {\sum\limits_{m\in\mcal{M}} {c\sqb{m}\delta \left( {kT - {t_m}} \right){e^{ - \jmath \barwo nk}}} } \notag  \\
\label{eq:SOCE}
& = \ \  \sum\limits_{m\in \mcal{M}} {c\sqb{m}{e^{ - \jmath \frac{\barwo n}{T}{t_m}}}}
\end{align}
with $M = |\mcal{M}|$ terms. {Estimating the unknown parameters in \eqref{eq:SOCE} boils down to the spectral estimation problem \cite{Kay:1988:Book}.} In the area of \emph{error correction coding} and \emph{impulse} cancellation, Wolf \cite{Wolf:1983} first observed that the non-bandlimited spikes (or $\bar{r}\sqb{k}$) being a parametric function, can be curve-fitted (\eg using Prony's method) on the ``out-of-band'' interval, namely, ${ \mset{K-1} \setminus \eset{K-1}}$ in the Fourier domain. This approach was also used for impulse noise removal from images \cite{Rioul:1996} as well as bandlimited functions, in the context of Fourier \cite{Marziliano:2006} and generalized-Fourier domain \cite{Bhandari:2015a} sampling theory.

In our case, we notice that $\rob{\jmath\omega}\widehat{\RO{g}}\rob{\omega}$ or equivalently $\dft{r}{n}$ in \eqref{eq:SOCE} is the contribution due to non-bandlimited spikes. To see the approach in action, let us define a polynomial $\poly{z}, z\iZ$ of degree $M = |\mcal{M}|$,
\begin{align}
\label{proots}
& \poly{z} = \prod\limits_{m = 0}^{M-1} {\left( {1 - \frac{\xi_m}{z}} \right)}  = \sum\limits_{n = 0}^{M} {{p\sqb{n}}{z^{ - k}}} \\ 
\label{eq:roots}
\mbox{with,} \ \    & \mathrm{roots}\rob{\poly{z}} = \{\xi_m\}_{m=0}^{M-1}, \quad \xi_m = \e^{-\jmath \frac{\barwo}{T} t_m}.
\end{align}
To evaluate the roots of the polynomial $\poly{z}$, one starts with identification of the filter coefficients\footnote{We use $M$ unknowns of $p\sqb{m}$ instead of $M+1$ as given by \eqref{proots} as we normalize $p\sqb{M} = 1$.} $\{p\sqb{m}\}_{m=0}^{M-1}$ using Prony's Method. On any interval of size $2M$, we note that $\rob{p*\hat{\bar{r}}}\sqb{n} =0$, which is because, 
\begin{align}
\rob{p*\hat{\bar{r}}}\sqb{l}& = \sum\limits_{n=0}^{M} {{p\sqb{n}}\dft{r}{l-n}}  \hfill \notag \\
& \EQc{eq:SOCE} \sum\limits_{m=0}^{M-1} {c\sqb{m} 
{\sum\limits_{n=0}^{M} {p\sqb{n}{e^{\jmath \frac{\barwo}{T}n{t_m}}}} } 
{e^{ - \jmath \frac{\barwo}{T}l{t_m}}}} \notag \\
& \EQc{proots} \sum\limits_{m=0}^{M-1} {c\left[ m \right] 
\underbrace{\poly{{e^{-\jmath \frac{\barwo}{T}{t_m}}}} }_{\poly{\xi_m}=0} {e^{ - \jmath \frac{\barwo}{T}l{t_m}}}} = 0.
\label{prony}
\end{align} 
{This leads to a recipe for evaluation of unknown filter $\mat{p} = \sqb{\ p\sqb{0}, \cdots, p\sqb{M} \ }^\top$ with $p\sqb{M} = 1$.} Algorithmically, \eqref{prony} implies that the coefficients of the unknown $M$--tap FIR filter, $\mat{p}$ are in the kernel of a Toeplitz matrix comprising of out-of-band samples of $\dft{y}{n}$. More precisely, for a given vector $\mat{x}$, let us define a Toeplitz matrix\footnote{In \texttt{MATLAB}, given a vector \texttt{x} defined by $2M$ contiguous samples, namely, \texttt{x[-M+L]} $\cdots$ \texttt{x[M-1+L]} where \texttt{L} is any integer-valued translate, the corresponding toeplitz matrix $\txmat$ is conveniently obtained by defining the function handle \texttt{T = @(x,M) toeplitz(x(M+1:end),x(M+1:-1:1))}.} $\txmat \in \mathbf{C}^{M\times\rob{M+1}}$ as, 
\begingroup
\renewcommand{\arraystretch}{1.1}
\begin{equation}
\label{tmatrix}
\txmat \DE 
\left[ {\begin{array}{*{20}{c}}
  {x\left[ 0 \right]}&{x\left[ { - 1} \right]}& \cdots &{x\left[ { - M} \right]} \\ 
  {x\left[ 1 \right]}&{x\left[ 0 \right]}& \cdots &{x\left[ {1 - M} \right]} \\ 
   \vdots & \vdots & \ddots & \vdots  \\ 
  {x\left[ {M - 1} \right]}&{x\left[ {M - 2} \right]}& \cdots &{x\left[ { - 1} \right]} 
\end{array}} \right].
\end{equation}
\endgroup
Then, \eqref{prony} amounts to, 
\begin{equation}
\label{eq:pest}
{\mathbf{p}} \in \ker \rob{\trmat} \  \Longleftrightarrow \ \trmat  \mat{p} = \mat{0}.
\end{equation}
By construction, $\trmat$ requires any $2M$, contiguous samples for identifying $M = |\mcal{M}|$ folds of the modulo non-linearity. As per our definition in \eqref{eq:ydft}, $\dft{r}{n}$ is isolated on the set $ \mset{K-1} \setminus \eset{K-1}$. For the estimation of $\mat{p}$ in \eqref{eq:pest}, the requirement is that the cardinality of the set $\mset{K-1} \setminus \eset{K-1}$ is larger than $2M$. 
From \fig{fig:dft}, we see that $|\mset{K-1} \setminus \eset{K-1}| = K -2P- 2$ and this translates to the condition,  
\[
K -2P- 2 \geqslant 2M \  \Longleftrightarrow \  K \geqslant 2\rob{ P  + M + 1}
\]
With $KT=\tau$, the sampling density criterion, 
\begin{equation}
\label{eq:sampling}
T = \TFD \leq \frac{\tau} { 2\left( P  + M + 1 \right)},  
\quad P = \left\lceil {\frac{\Omega }{{{\omega _0}}}} \right\rceil 
\end{equation}
guarantees the recovery of the $M$ folding instants $\{t_m\}_{m=0}^{M-1}$ introduced by the non-ideal operator $\MONI{\cdot}$. We formalize our recovery guarantee in the following theorem.

\begin{theorem}[Fourier Domain Reconstruction] 
\label{thm:FDUST}
Let $\BL{g}$ be a $\tau$-periodic function. Suppose that we are given $K$ modulo samples of $y\sqb{k} = \MONI{g\rob{kT}}$  folded at most $M$ times. Then a sufficient condition for recovery of $g\rob{t}$ from $y\sqb{k}$ (up to a constant) is that, $T\leq \tau/K$ and $K \geqslant 2\left( {\left\lceil {\frac{{\Omega \tau }}{{2\pi }}} \right\rceil  + M + 1} \right)$.
\end{theorem}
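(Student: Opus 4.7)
The plan is to work entirely in the Fourier (DFT) domain on the first-order finite differences of the modulo samples, isolate the impulsive folding contribution via Prony-type spectral estimation on the out-of-band frequencies, subtract it off, and then recover $\gamma\sqb{k}$ (and hence $g$ up to an additive constant) by inverting the difference operator and interpolating.

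First, I would form $\bar{y}\sqb{k} = \Delta y\sqb{k}$ and invoke the decomposition already derived in \eqref{eq:spike}, so that $\bar{y}\sqb{k}$ equals the bandlimited difference $\bar{\gamma}\sqb{k}$ minus a stream of at most $M$ weighted deltas located at the unknown folding instants $t_m \in T\mathbb{Z} \cap [0,\tau)$. Taking the length-$(K-1)$ DFT and using that $\BL{g}$ together with $\tau$-periodicity restricts the Fourier support of $\bar{\gamma}$ to $\setin$ (where $P=\lceil \Omega/\omega_0\rceil$), I obtain the split \eqref{eq:ydft}: the bandlimited part contributes only on $\eset{K-1}$, whereas the impulsive part \eqref{eq:SOCE} contributes on every frequency.

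Next, I would restrict attention to the out-of-band set $\setout$, on which $\dft{y}{n} = -\dft{r}{n}$ is a sum of $M$ complex exponentials with unknown nodes $\xi_m = \e^{-\jmath\barwo t_m/T}$ and unknown amplitudes $c\sqb{m}$. Applying Prony's method to any $2M$ contiguous out-of-band samples, the annihilating filter $\mat{p}$ from \eqref{proots} must lie in the kernel of the Toeplitz matrix $\trmat$ by the calculation \eqref{prony}--\eqref{eq:pest}; rooting $\poly{z}$ recovers $\{\xi_m\}$ and hence $\{t_m\}$, after which the coefficients $c\sqb{m}$ follow from a Vandermonde linear system in the $\xi_m$. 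With $\bar{r}\sqb{k}$ now in hand I set $\bar{\gamma}\sqb{k} = \bar{y}\sqb{k} + \bar{r}\sqb{k}$, invert the finite difference (equivalently, divide the DFT of $\bar{\gamma}$ by $1-\e^{-\jmath\barwo n}$ on the nonzero frequencies) to obtain $\gamma\sqb{k}$ up to an additive constant, and bandlimited-interpolate using the in-band Fourier coefficients on $\eset{K-1}$ to reconstruct $g(t)$.

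Finally, the dimension count gives the two stated conditions. The requirement $T\leq \tau/K$ is just $KT = \tau$, which aligns the sampling grid with a full period and makes the DFT machinery compatible with the Fourier series \eqref{gfs}. For the Prony step to be well-posed, the out-of-band set must contain at least $2M$ contiguous frequencies; since $|\eset{K-1}|=2P+1$ and the total length is $K-1$, this forces $K - 2 - 2P \geq 2M$, i.e.\ $K \geq 2(P+M+1)$. The main obstacle I expect is the uniqueness of the Prony step: I need the $M$ nodes $\xi_m$ to be distinct so that $\trmat$ has rank exactly $M$ and its kernel is one-dimensional, pinning down $\mat{p}$ up to scaling. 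Distinctness is immediate as long as the folding instants $t_m$ are distinct points of $T\mathbb{Z}\cap[0,\tau)$; the degenerate case in which two non-ideal folds fall on the same sample simply collapses into a single effective spike with summed amplitude, reducing the effective $M$, and since the sampling bound is monotone in $M$ it remains sufficient.
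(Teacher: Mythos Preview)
Your proposal is correct and follows essentially the same route as the paper's proof: isolate $\dft{r}{n}$ on the out-of-band set via \eqref{eq:ydft}, apply Prony's annihilating-filter argument \eqref{prony}--\eqref{eq:pest} to recover the folding instants and amplitudes, then anti-difference and combine with $y\sqb{k}$. Your explicit treatment of the rank/uniqueness issue (distinct nodes, collapsing coincident folds into fewer effective spikes) is slightly more detailed than the paper, which dispatches it with a one-line appeal to the Vandermonde/Carath\'eodory--Fej\'er decomposition of the Toeplitz matrix $\tymat$.
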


\begin{proof}
Note that, by \eqref{eq:ydft} one has that
\[
\dft{y}{n} = -\dft{r}{n}, \quad \forall  n  \in \setout.
\]
Then, observing that only such $n$ appear on the right hand side of \eqref{tmatrix}, we obtain from \eqref{eq:pest} that ${\mathbf{p}} \in \ker \rob{\tymat}$. Given that $\tymat$ has rank $M$ (this follows from the Vandermonde decomposition of Toeplitz matrices, similar to the Carath\'{e}odory--Fej\'er decomposition, cf.~\cite{Yang:2016}), this implies that $\widetilde{\mat{p}} = \nu \mat{p}$ for some $\nu \in \mathbb{C}$ and $\widetilde{\mat{p}}$ as obtained in step $4\textit{b})$ of Algorithm~\ref{alg:1}. Consequently, the roots of $\widetilde{\xi}_m$ of ${\widetilde{\mathsf{P}}_M \rob{z}}$ agree with the roots ${\xi}_m$ of $\poly{z}$ and one has $\widetilde{t}_m = t_m$ up to index permutation. Once the exponents have been identified, plugging \eqref{eq:SOCE} into \eqref{eq:ydft} yields a linear system which can be solved by least-squares minimization as in Step $4\textit{e})$. This implies that $\widetilde{c}\sqb{m}  = c\sqb{m}$ and hence $\widetilde {\bar r}\sqb{k}$ = $ {\bar r}\sqb{k}$, and therefore $\widetilde{r}\sqb{k}$ = ${r}\sqb{k}$ (up to an additive constant). Together with the modulo samples, these residuals allow for the recovery of the unfolded samples, which completes the proof. 
\end{proof}

{\noindent \bf {Reconstruction without Periodicity.}} Recall that the periodicity assumption made in our paper enables a practical approach for recovery of signals from folded measurements. A theoretical reconstruction guarantee when infinite samples are available, however, can also be obtained for signals that are not periodic via the discrete-time Fourier transform (DTFT). Then the requirement will be to have a sampling density high enough to ensure that in the DTFT, the shift between the aliased copies of the frequency support is at least $\Omega+2M+1$.

\subsection{Summary of Recovery Algorithm}
{Here, we qualitatively summarize the rationale of Algorithm~\ref{alg:1}. When the sampling criterion in \eqref{eq:sampling} is satisfied, our recovery method can be applied to ``unfold'' the non-ideal, modulo samples. Starting with $K$ folded samples of the bandlimited function in \eqref{gfs}, we compute the first-difference $\bar{y} = \Delta y$ (cf.~\eqref{eq:barvec}). Then, we apply the DFT on the vector  $\bar{y}$ consisting of $K-1$ samples, yielding $\mdft{y}$ in \eqref{eq:dftvec}. According to \eqref{eq:ydft}, the DFT coefficients defined for $n \in \setout$ are solely attributed to the \emph{unknown} folding parameters $\{c\sqb{m},t_m\}_{m\in\mcal{M}}$ of $\RO{g}\rob{t}$ in \eqref{eq:res}. Hence, in the Fourier domain, we define $\sqb{\mat{z}}_n = \sqb{\mdft{y}}_n, \forall n\in \mset{K-1} \setminus \eset{K-1}$ which follows the parametric representation in \eqref{eq:SOCE}. Given $M = |\mcal{M}|$, estimation of $\{c\sqb{m},t_m\}$ boils down to the classical \emph{spectral estimation problem} \cite{Kay:1988:Book}. Concretely, this is implemented in step $4)$ of Algorithm~\ref{alg:1} yielding estimates $\{\widetilde c\sqb{m}, \widetilde t_m\}$ which are used to estimate $\widetilde {\bar r}\sqb{k}$ (cf.~step $5)$). To map $\widetilde {\bar r}\sqb{k}\to {\widetilde r}\sqb{k}$, we need to invert the first-difference operator and this is carried out in (cf.~step $6)$. Combining ${\widetilde r}\sqb{k}$ with modulo samples $y\sqb{k}$ in accordance with \eqref{eq:spike} yields bandlimited samples $\RFD\sqb{k}$ (cf.~step $7)$). Low-pass filtering the same results in the continuous-time function $\BL{\widetilde{g}}$.}

\begin{algorithm}[!t]
\SetAlgoLined
{\bf Input:} $\{y\sqb{k}\}_{k=0}^{K-1}$, $\tau,P$ in \eqref{gfs} and $M = |\mcal{M}|$ in \eqref{eq:SOCE}.\\
\KwResult{Samples $\RFD$ and bandlimited function, $\widetilde g\rob{t}$. }

\begin{enumerate}[label = $\arabic*)$,leftmargin=*,itemsep=0pt]
\item Compute $\bar{y} = \Delta y$ as in \eqref{eq:barvec}. 
\item Compute DFT or $\mdft{y} = \mat{V} \mat{\bar{y}}$ using \eqref{eq:dftvec}. 
\item Define $\sqb{\mat{z}}_n = \sqb{\mdft{y}}_n, \ n\in \mset{K-1} \setminus \eset{K-1}$.
\item Fold Estimation in the Fourier Domain Estimation.
\begin{enumerate}[label = $4 \alph*)$,leftmargin=*,itemsep=0pt]
  \item Using $\mat{z}$ from $3)$, define $\tpmat$ using \eqref{tmatrix}. 
  \item Find $\widetilde{\mat{p}}$ such that $\tpmat  \widetilde{\mat{p}} = \mat{0}$.
    \item Compute the roots $\widetilde \xi_m$ of the polynomial ${{\widetilde{\mathsf{P}}_M \rob{z}}}$ in analogy to \eqref{proots}. 
  \item Estimate $\widetilde{t}_m = - T \angle{\widetilde \xi}_m / \barwo$.
  \item Estimate $\widetilde c\sqb{m}$ using least-squares minimization, 
  \[\widetilde c\left[ m \right] \EQc{eq:SOCE} {\min _{c\left[ m \right]}}\sum\limits_n {{{\left| {z\sqb{n} - \sum\limits_{m\in\mcal{M}} {c\sqb{m}\widetilde{\xi}_m^n} } \right|}^2}}.\] 
\end{enumerate}
\item Estimate $\widetilde {\bar r}\sqb{k}$. This is done by plugging $\{\widetilde c\sqb{m}, \widetilde t_m\}_{m\in \mcal{M}}$ in \eqref{eq:SOCE}, yielding $\widehat{\widetilde{r}}\sqb{n}$ and then performing inverse DFT. 
\item Estimate $\widetilde r\sqb{k}$. 
\begin{enumerate}[label = $6 \alph*)$,leftmargin=*,itemsep=1pt]
  \item Zero-pad to convert $\widetilde {\bar r}\in \mset{K-1} \longrightarrow \widetilde {\bar r}\in\mset{K} $. 
  
	  That is, $\widetilde {\bar r}\sqb{k+1} = \widetilde {\bar r}\sqb{k}, k\in \mset{K-1}$ and $\widetilde {\bar r}\sqb{0} = 0$.
  \item Apply anti-difference, $\widetilde{r} \sqb{k} = \sum\limits_{m=0}^{k} \widetilde {\bar r}\sqb{m}, k\in\mset{K}$.

\end{enumerate}
\item Estimate $\RFD\sqb{k}=\widetilde { r}\sqb{k} + y\sqb{k}$ (up to an unknown const.).
\item Estimate $\widetilde g\rob{t}$ by applying sinc-interpolation to $\widetilde \gamma\sqb{k}$.  
\end{enumerate}
\caption{Fourier-Prony Recovery Algorithm.}
\label{alg:1}
\end{algorithm}

\subsubsection{Implementation Strategy}
\label{sec:IS}
When the conditions of the above theorem are met, the steps outlined in Algorithm~\ref{alg:1} recover the unfolded samples $\gamma\sqb{k}$ up to an unknown constant. The reconstruction is exact in the absence of noise. In the presence of noise and uncertainties arising from hardware implementation, the reconstruction procedure can be stabilized, for example, with the Matrix Pencil method \cite{Hua:1990}. Note however, that the Matrix Pencil method involves a tuning parameter (namely, the pencil parameter \cite{Hua:1990}), and finding the optimal choice for this parameter in a practical setting is typically not straight forward, especially as the number of spikes arising in our experiments exceeds the numbers commonly studied in previous works. As the Matrix Pencil method is not the main focus of this paper, for our experiments we use the choice that yields the best reconstruction.

\begin{figure*}[!t]
\centering
\includegraphics[width = 1\textwidth]{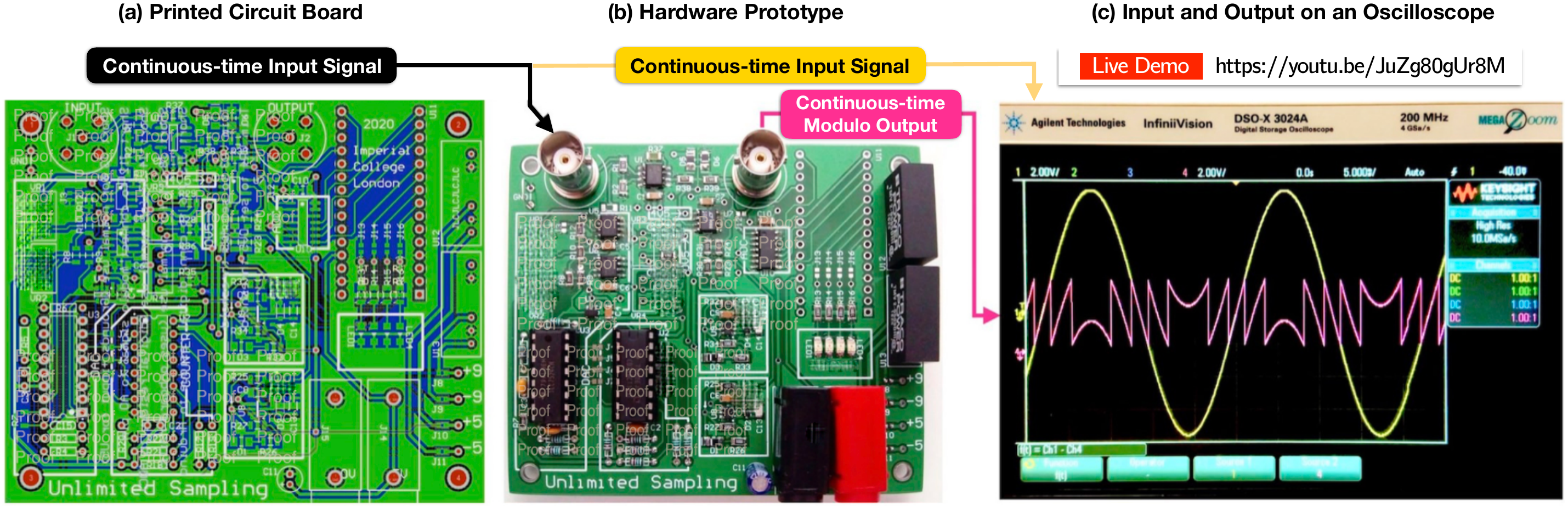}
\caption{Hardware prototype for unlimited sensing framework. Our initial design is capable of folding a signal that is as large as $24\lambda$ and as shown in Section \ref{sec:exp}, we have tested recovery of signals as large as $\vpp{48}$ and $\vpp{58}$, respectively. (a) Printed circuit board. (b) Electronic implementation that transforms a continuous-time input signal into a continuous-time modulo output. (c) Live screen shot of the oscilloscope plotting the output of a conventional ADC (yellow) and USF based ADC (pink) signal. When the input signal exceeds the oscilloscope's dynamic range, the oscilloscope's inbuilt ADC saturates. However, the output signal from our circuit continues to fold. A live {YouTube} demonstration of this hardware experiment is available at \href{https://youtu.be/JuZg80gUr8M}{\texttt{https://youtu.be/JuZg80gUr8M}}.}
\label{fig:HW}
\end{figure*}

\section{Hardware Experiments}
\label{sec:exp}
To investigate the validity of our approach we designed a hardware prototype---the \emph{US-ADC}---implementing the unlimited sampling pipeline. The electronic circuit, its hardware implementation together with an oscilloscope screenshot are shown in \fig{fig:HW}. To see an experiment in action, a live \texttt{YouTube} demonstration  has been made available at \href{https://youtu.be/JuZg80gUr8M}{\texttt{https://youtu.be/JuZg80gUr8M}}. In spirit of reproducible research, we plan to provide a ``\textsc{Do It Yourself}'' (DIY) hardware guide and release our algorithmic implementations in the near future. 


\medskip

\noindent{\bf Experimental Protocol.} {For each experiment, we simultaneously acquire the bandlimited signal $\BL{g}$ and its modulo samples $y\sqb{k}\in{\sqb{0.5,4.5}} + \mathsf{DC}$ (volts) where $\mathsf{DC}$ is an adjustable constant in the US-ADC. We use $\lambda \approx 2.01 \pm 3/20$ where $\pm 3/20$ is a manually adjustable design parameter. In experiments $1$--$4$, we use $\BL{g}$ in the range $\vpp{20}$ (volts, peak-to-peak) and $\mathsf{DC}$ is adjusted so that the modulo samples are aligned to the x-axis, taking both positive and negative values. In experiments $5$(a) and $5$(b), we use substantially HDR signals with a range of $\vpp{\approx48}$ and $\vpp{\approx58}$, respectively.} Although our hardware is equipped with its own ADC, to be able to obtain a ground truth, we simultaneously plot the input and output of the US-ADC on the $4$ channel DSO-X $3024$A oscilloscope. There on, we use the in-built $8$-bit sampler of the oscilloscope to sample the waveforms. Reflecting  the $8$-bit resolution, the samples are effectively quantized by an $8$-bit, uniform quantizer and hence, the measurements are corrupted by quantization noise following the model described in \cite{Bhandari:2020f}. For each experiment, we report the numerical values for,
\begin{enumerate}[leftmargin = *, label = $\bullet$]
  \item the experimental sampling rate $T$ and $\TFD$ in \eqref{eq:sampling} which is the sampling criterion required by Theorem~\ref{thm:FDUST}. Where needed, we also report $\TUS$ that guarantees that \eqref{eq:US} holds. 
    
  \item the dynamic range of the input signal $\rob{\DR{\gamma}}$ and the modulo samples $\rob{\DR{y}}$  where $\DR{z} = \max z\sqb{k} - \min z\sqb{k}$. The ratio $\DR{\gamma}/\DR{y}$ provides a measure of hight-dynamic-range recovery, independent of threshold $\lambda$.
\end{enumerate}
The bandwidth parameter $P$ in \eqref{gfs} is estimated from the ground truth. Since experimental data may not be exactly periodic, we use a slightly higher $P$ so that the complex exponentials are well isolated in Step $3)$ of Algorithm~\ref{alg:1} and one obtains an accurate reconstruction. We assume that $M = |\mcal{M}|$ is given and we use the matrix pencil method \cite{Hua:1990} in Step $4)$ of Algorithm~\ref{alg:1} (cf.~Section~\ref{sec:IS}). For performance evaluation, we compute the mean squared error (MSE) of the reconstruction defined in \eqref{eq:MSE}. We compare the following metrics. 
\begin{enumerate}[leftmargin = *, label = $\bullet$]
  \item $\mse{\RFD}$ the reconstruction MSE obtained by using our proposed approach in Algorithm~\ref{alg:1}. 
  \item $\mse{\RUS}$, the reconstruction MSE resulting from the unlimited sampling algorithm \cite{Bhandari:2020f} applied to modulo samples with a hardwired threshold value of $\lambda = 2.01$. 
  \item  $\mse{\RUSopt}$, the reconstruction MSE for the identical circuit design, but resulting from the unlimited sampling algorithm with a parameter $\lopt$ that is different from the hardwired modulo threshold $\lambda=2.01$, namely,
    \begin{equation}
    \label{eq:lopt}
    \lopt =   {\min\limits_{\widetilde{\lambda} }}\sum\limits_{k = 0}^{K - 1} {{{\left| {\gamma\sqb{k}  - \texttt{USRec}_{\widetilde{\lambda}}
    \rob{y} \sqb{k}} \right|}^2}} 
    \end{equation}
where $\RUS \sqb{k} = \texttt{USRec}_{\widetilde{\lambda}}\rob{y} \sqb{k}$ is the reconstruction due to unlimited sampling algorithm \cite{Bhandari:2020f}, for a given $\widetilde{\lambda}$. The rationale behind this choice is that in practice, where occasionally non-ideal folds may appear {(cf.~\fig{fig:demo1}(a) and the residue depicted in \fig{fig:demo1}(b))}, 
the hardwired threshold $\lambda$ will not always be the optimal parameter choice; in contrast \eqref{eq:lopt} will yield an optimal choice by design. 
\end{enumerate}
For all experiments, we use calibration to estimate the unknown offset arising from Step $6)$ of Algorithm \ref{alg:1}. The experimental parameters and results are summarized in Table~\ref{tab:1}. \smallskip

\begin{table*}[!t]
\centering
\caption{Summary of Experimental Parameters and Performance Evaluation}
\resizebox{\textwidth}{!}{%
\begin{threeparttable}
\begin{tabular}{@{}cccccccccccccc@{}}
\toprule
         \multicolumn{1}{c}{\multirow[t]{2}{*}{Exp.}} &
         \multicolumn{1}{c}{\multirow[t]{2}{*}{Fig.~No.}} &
         \multicolumn{1}{c}{\multirow[t]{2}{*}{$T$}}  &
         \multicolumn{1}{c}{\multirow[t]{2}{*}{$\TFD $}} &
         \multicolumn{1}{c}{\multirow[t]{2}{*}{$K$}} &
         \multicolumn{1}{c}{\multirow[t]{2}{*}{$\tau$}} &
         \multicolumn{1}{c}{\multirow[t]{2}{*}{$P$}} &
         \multicolumn{1}{c}{\multirow[t]{2}{*}{$\DR{\gamma}$}} &
         \multicolumn{1}{c}{\multirow[t]{2}{*}{$\DR{y}$}} &
         \multicolumn{1}{c}{\multirow[t]{2}{*}{$M$}} &
         \multicolumn{1}{c}{$\mse{\RFD}$} &
        $\mse{\RUS}$ &
        $\mse{\RUSopt}$ &
        \multicolumn{1}{c}{\multirow[t]{2}{*}{$\lopt$}} \\ [2pt]
        	\multicolumn{1}{c}{} &
	\multicolumn{1}{c}{} &
        	\multicolumn{1}{c}{$\rob{\mu\mathrm{s}}$} &
         \multicolumn{1}{c}{$\rob{\mu\mathrm{s}}$} &
         \multicolumn{1}{c}{} &
         \multicolumn{1}{c}{$\rob{\mathrm{ms}}$} &
         \multicolumn{1}{c}{$\left\lceil {\frac{\Omega }{{{\omega _0}}}} \right\rceil $} &
         \multicolumn{1}{c}{(V)} &
         \multicolumn{1}{c}{(V)} &
         \multicolumn{1}{c}{} &         
         \multicolumn{3}{c}{\cellcolor[HTML]{EFEFEF}(Reconstruction MSE)} &
          \multicolumn{1}{c}{}       \\ \midrule         
$1$ & \ref{fig:MTCS}, \ref{fig:ExpA} & $132.03$ & $517.86$ & $455$ & $60.07$ & $37$ & $19.87$ & $3.89$ & $20$ & $0.384\ep{-3}$ & $2.39\ep{-3}$ & $0.251\ep{-3}$ & $2.04$ \\
$2$ &  \ref{fig:demo1}(a), \ref{fig:ExpB} & $4$ & $21.652$ & $249$ & $0.996$ & $15$ & $7.71$ & $4.12$ & $7$ & $0.343\ep{-3}$ & $3.7$ & $18.7\ep{-3}$ & $1.48$ \\
$3$ &  \ref{fig:ExpC} & $1000$ & $2426.8$ & $199$ & $199$ & $14$ & $19.61$ & $3.99$ & $26$ & $5.92\ep{-3}$ & $3.38$ & $7.01$ & $0.83$ \\
$4$ &  \ref{fig:ExpD} & $81.25$ & $144.25$ & $245$ & $19.91$ & $20$ & $20.5$ & $3.97$ & $48$ & $9.38\ep{-3}$ & $1.416\ep{-2}$ & --- & --- \\
$5$(a) &  \ref{fig:ExpEa} & $108.49$ & $407.15$ & $1291$ & $140.1$ & $7$ & $\approx 48$ & $4.26$ & $161$ & --- & --- & --- & --- \\
$5$(b) & \ref{fig:ExpEb} & $70$ & $260.31$ & $357$ & $24.99$ & $3$ & $57.6$ & $5.44$ & $44$ & $0.3405$ & $1.0857$ & $0.4768$ & $2.07$ \\
\bottomrule
\end{tabular}%
\begin{tablenotes}
\scriptsize
\item $\bullet$ $T$ is the sampling rate of the ADC while $\TFD$ is the sampling rate criterion in Theorem~\ref{thm:FDUST}. $\bullet$ $\DR{\gamma}$ refers to the dynamic range of input signal (ground truth) and is computed using $\DR{\gamma} = \max \gamma\sqb{k} - \min \gamma\sqb{k}$. Similarly, the output signal dynamic range $\DR{y}$ refers to the modulo samples $y\sqb{k}$. $\bullet$ $\mse{\RFD}$ refers to the MSE due to reconstruction using the proposed Fourier domain approach. $\bullet$ $\mse{\RUS}$ refers to the MSE due to reconstruction using unlimited sampling approach with $\lambda = 2.01$. $\bullet$ $\mse{\RUSopt}$ refers to the MSE due to reconstruction using unlimited sampling approach with $\lopt$ obtained by \eqref{eq:lopt}.
\end{tablenotes}
\end{threeparttable}
}
\label{tab:1}
\end{table*}

\begin{figure*}[!t]
\centering
\includegraphics[width = 1\textwidth]{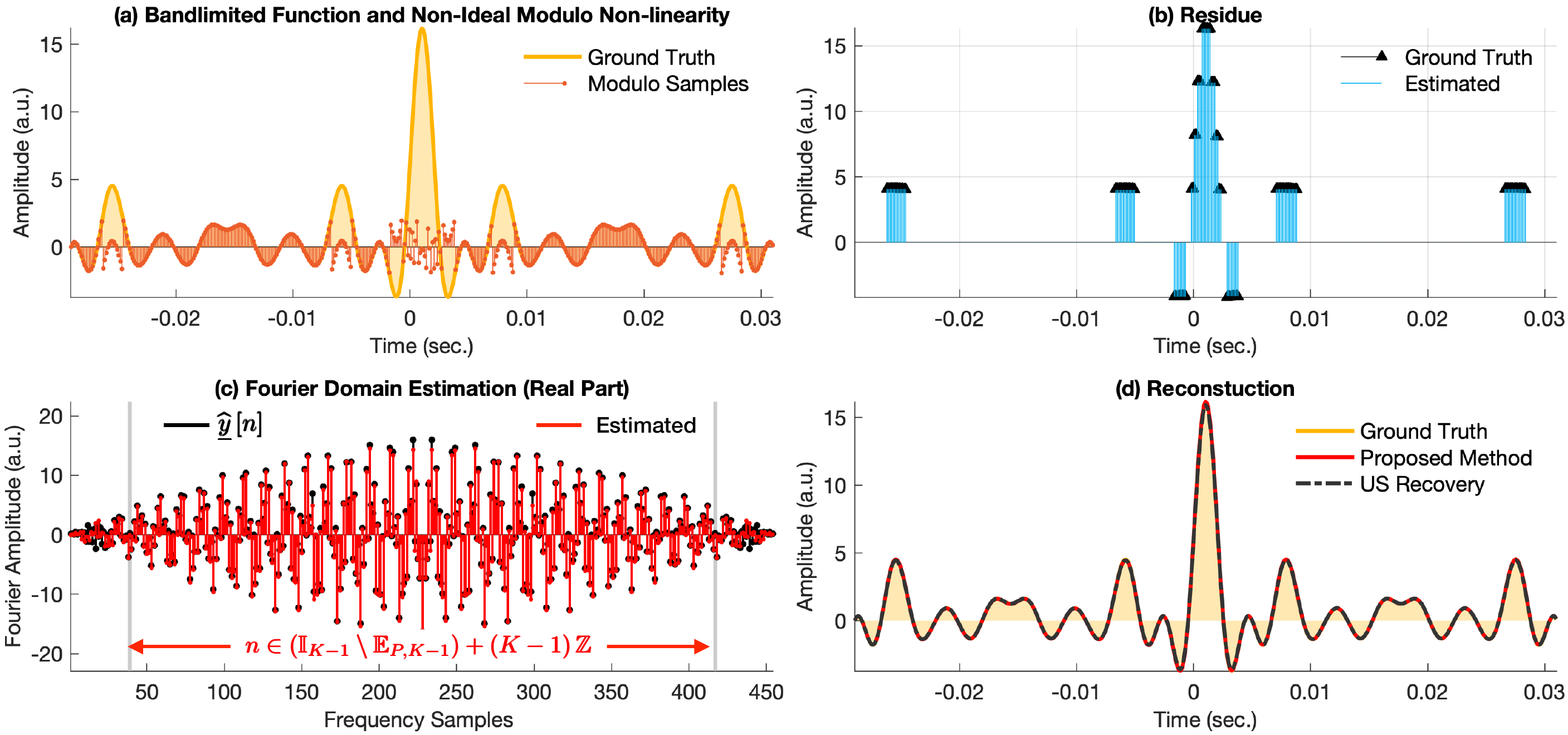}
\caption{Experiment 1: Backwards compatibility with unlimited sampling. (a) Ground truth signal and modulo samples. {The $\mathsf{DC}$ has been adjusted so that the modulo samples are aligned with the x-axis.} (b) Ground truth residue and recovered residue. (c) Fourier domain estimation of $\dft{r}{n}$; as desired, it approximately agrees with $\dft{y}{n}$ for $n\in \mset{K-1} \setminus \eset{K-1}$. (d) Reconstruction using proposed approach agrees with the unlimited sampling method $\rob{\lambda = 2.01}$. }
\label{fig:ExpA}
\end{figure*}

\begin{figure}[!t]
\centering
\includegraphics[width = 0.65\columnwidth]{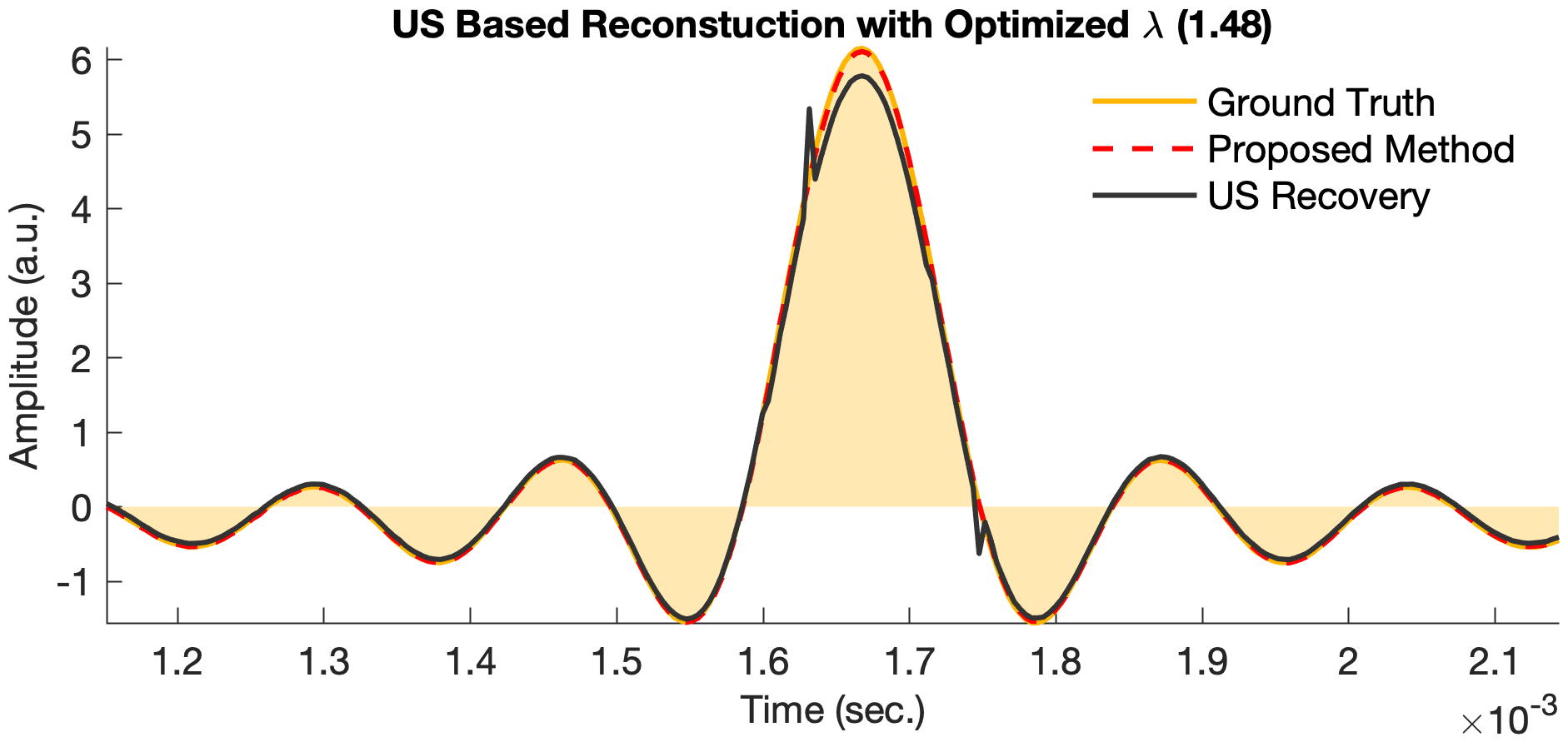}
\caption{Experiment 2: Optimizing for $\lambda$ enhances the performance of the unlimited sampling method. In this case, $\lopt = 1.48$ and using this value, the reconstruction falls steeply from $\mse{\RUS} = 3.7$ to $\mse{\RUSopt} = 18.7\times10^{-3}$. For comparison with reconstruction using the hardwired value $\lambda = 2.01$, see \fig{fig:demo1}(d). Having obtained $\lopt$, we note that the breaking points for unlimited sampling method are the sparse locations in \fig{fig:demo1}(b) where the amplitudes deviate the largest from the grid $2\lambda\Z$. These locations can be accurately estimated by the Fourier approach of this paper.}
\label{fig:ExpB}
 \end{figure}

\medskip

\noindent {\bf Experiment 1: Backwards Compatibility with the Unlimited Sampling Approach.} The goal of this experiment is to show that in its finite dimensional setting, our recovery approach is backwards compatible with the unlimited sampling approach. For this experiment, we use modulo samples of a randomly generated trigonometric polynomial shown in \fig{fig:MTCS}. The experimental data and the corresponding recovery results are shown in \fig{fig:ExpA}. The signal with $\tau = 60.07 \times 10^{-3}$ s is sampled with $T = 132.03$ $\mu$s yielding $K = 455$ samples. The sampling rate required for recovery is $\TFD = 517.86$ $\mu$s. We use $P = 37$. The input signal dynamic range is $\DR{\gamma} = \vpp{19.866}$ and the same for folded samples is $\DR{y} = \vpp{3.8852}$. The ratio $\DR{\gamma} /\DR{y}  = 5.1132$ shows that a signal as large as $\approx 10$ times the US-ADC threshold can be recovered. The experimental data approximately satisfies the unlimited sampling hypothesis, namely the condition in \eqref{eq:US} and this results in a reconstruction MSE of $\mse{\RUS}= 2.385\times 10^{-3}$. Quantization and system noise lead to inaccuracies specially around $t=0$ where folding is concentrated. The proposed approach, with $M = 20$ folds, results in $\mse{\RFD}= 3.838\times 10^{-4}$ which is a factor $10$ improvement in the MSE. This performance is comparable to $\mse{\RUSopt}= 2.385\times 10^{-4}$ which is obtained by optimizing $\lambda$ using \eqref{eq:lopt}, which turns out to be $\lopt = 2.04$.

\medskip
\noindent {\bf Experiment 2: Moderate Number of Non-ideal Folds; Towards of a Hybrid Reconstruction Approach.} 
We generate a Dirichlet kernel (periodized sinc function). 
The experimental data is shown in \fig{fig:demo1} and in this case, $\tau = 9.96 \times 10^{-4}$ s. 
The signal is sampled with $T = 4$ $\mu$s yielding $K = 249$ samples. 
The sampling rate predicted by \eqref{eq:sampling} is $\TFD = 21.652$ $\mu$s.
We estimate $P = 15$. The input signal dynamic range is $\vpp{7.7098}$ and the same for folded samples is $\vpp{4.1171}$. 
The values are specifically chosen to evaluate the performance of the algorithm with a smaller number of non-ideal folds $\rob{M =7}$. 
Despite the non-idealities, our Algorithm \ref{alg:1} is able to accurately reconstruct the signal resulting in a reconstruction MSE of $\mse{\RFD} = 3.43 \times 10^{-4}$. 
In contrast, even though the sampling rate in this experiment satisfies the condition in \eqref{eq:US} (numerically), the reconstruction breaks down when using unlimited sampling algorithm \cite{Bhandari:2020f} and ${\mse{\RUS} = 3.7}$. 
This performance can be enhanced by optimizing $\lambda$, in which case we observe $\mse{\RUSopt} = 1.87\times10^{-2}$ which greatly improves up on the unlimited sampling algorithm with the hardwired parameter but remains two orders of magnitude worse than Algorithm 1. This worse performance is mainly due to a few ``breaking points'' between which one encounters a temporary offset. These breaking points correspond to the few locations in \fig{fig:demo1}(b) where the amplitudes deviate the most from the grid $2\lambda\Z$. However, these locations can be exactly estimated using our Fourier domain approach that is agnostic to $\lambda$. 
This shows promise for a hybrid reconstruction approach where the unlimited sampling method is used to resolve most of the folds followed by a Fourier domain approach to resolve the remaining breaking points. 

 \begin{figure*}[!t]
\centering
\includegraphics[width = 1\textwidth]{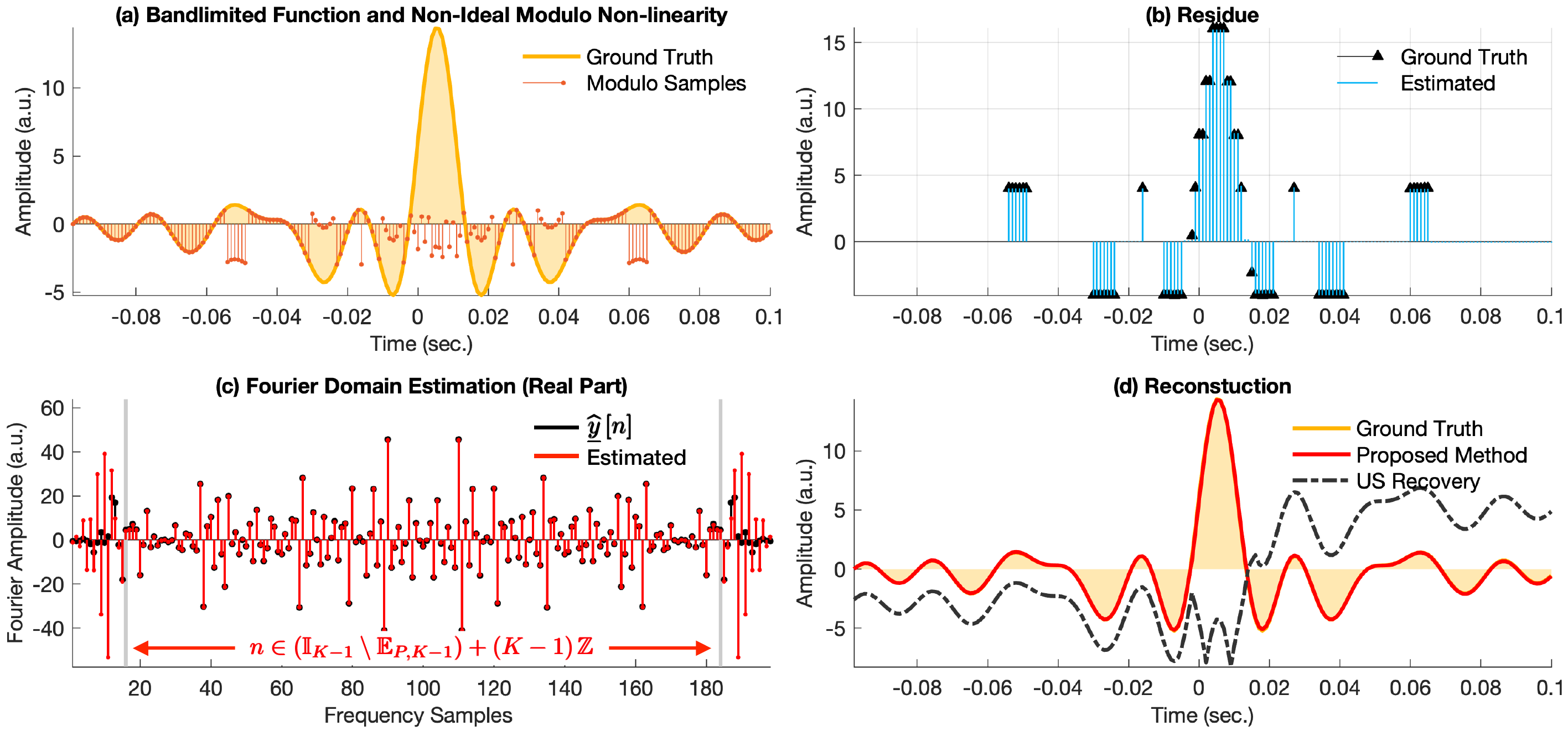}
\caption{Experiment 3: Recovery at lower sampling rates than what is prescribed by the Unlimited Sampling Theorem in Theorem~\ref{thm:UST} . (a) Ground truth and non-ideal modulo samples. (b) Ground truth and recovered residue. (c) Fourier domain estimation of $\dft{r}{n}$, again showing approximate agreement with $\dft{y}{n}$ for $n\in \mset{K-1} \setminus \eset{K-1}$. (d) Reconstruction using the proposed approach and the unlimited sampling method $\rob{\lambda = 2.01}$. }
\label{fig:ExpC}
\end{figure*}

\begin{figure*}[!t]
\centering
\includegraphics[width = 1\textwidth]{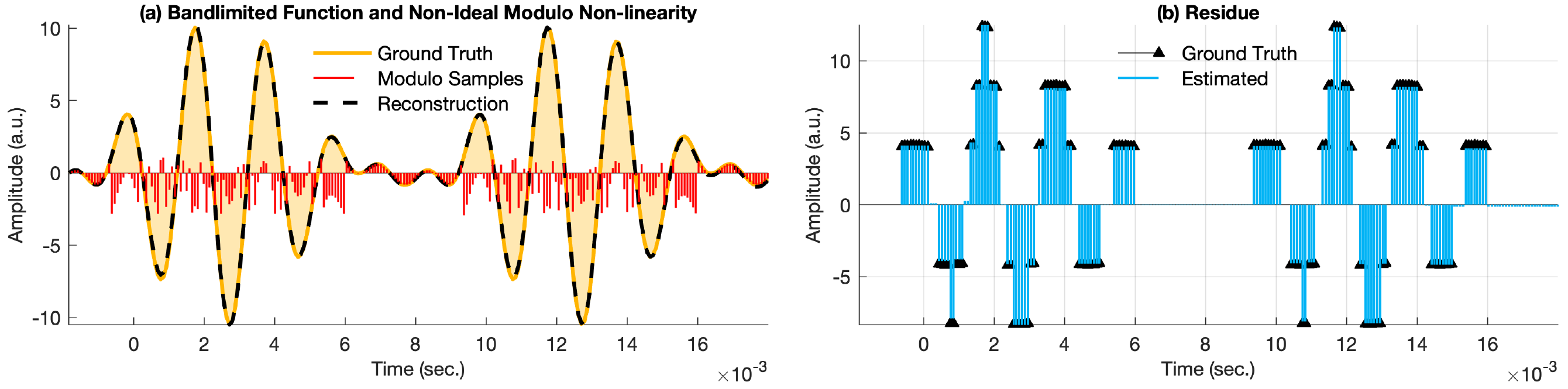}
\caption{Experiment 4: Burst signal with higher number of folds ($M = 48)$. (a) Ground truth signal and non-ideal modulo samples. (b) Ground truth residue and recovered residue. }
\label{fig:ExpD}
 \end{figure*}

\medskip
\noindent {\bf Experiment 3: Recovery where Unlimited Sampling Requires Higher Order Differences.} 
Given that the method proposed in this paper relies only on the first order differences of the samples, one could expect that sampling rate $T$ needs to be of size comparable to what is required for the unlimited sampling method with first order differences, $N= 1$. This experiment however confirms our theoretical finding that this is not the case. Despite a sampling rate for which the unlimited sampling method requires higher order differences, the Fourier based approach still yields accurate recovery -- as predicted by Theorem~\ref{thm:FDUST}. The experimental parameters are listed in Table~\ref{tab:1} and the reconstruction is shown in \fig{fig:ExpC}. Indeed, in this experiment, the choice of sampling rate $T = 1$ ms violates the numerical condition $|\Delta x| \leq \lambda$ which is necessary to allow for the choice $N=1$, but satisfies the condition of  Theorem~\ref{thm:FDUST}.

\medskip
\noindent {\bf Experiment 4: Recovery of Burst Signals with Clustered Folds.} In this example, we consider a ``burst" signal which introduces clustered folds. Such signals typically arise in digital and radio communications where \emph{amplitude modulation} is used for transmitting messages. From the experimental parameters in Table~\ref{tab:1}, we note that in comparison to previous setups,  this case results in a relatively high number of folds, $M = 48$, which are also clustered. This is a challenge in the super-resolution step of our algorithm, as such methods work best for few, well-separated spikes (corresponding to folds in our measurements). Nevertheless, with reasonable oversampling, $\rob{\TFD/T} \approx 1.78$, our recovery approach is able to reconstruct the signal accurately with $\mse{\RFD} = 9.37\times 10^{-3}$. The reconstruction is shown in \fig{fig:ExpD}(a) and the recovered residue is shown in \fig{fig:ExpD}(b).

 \begin{figure}[!t]
\centering
\includegraphics[width = 0.65\textwidth]{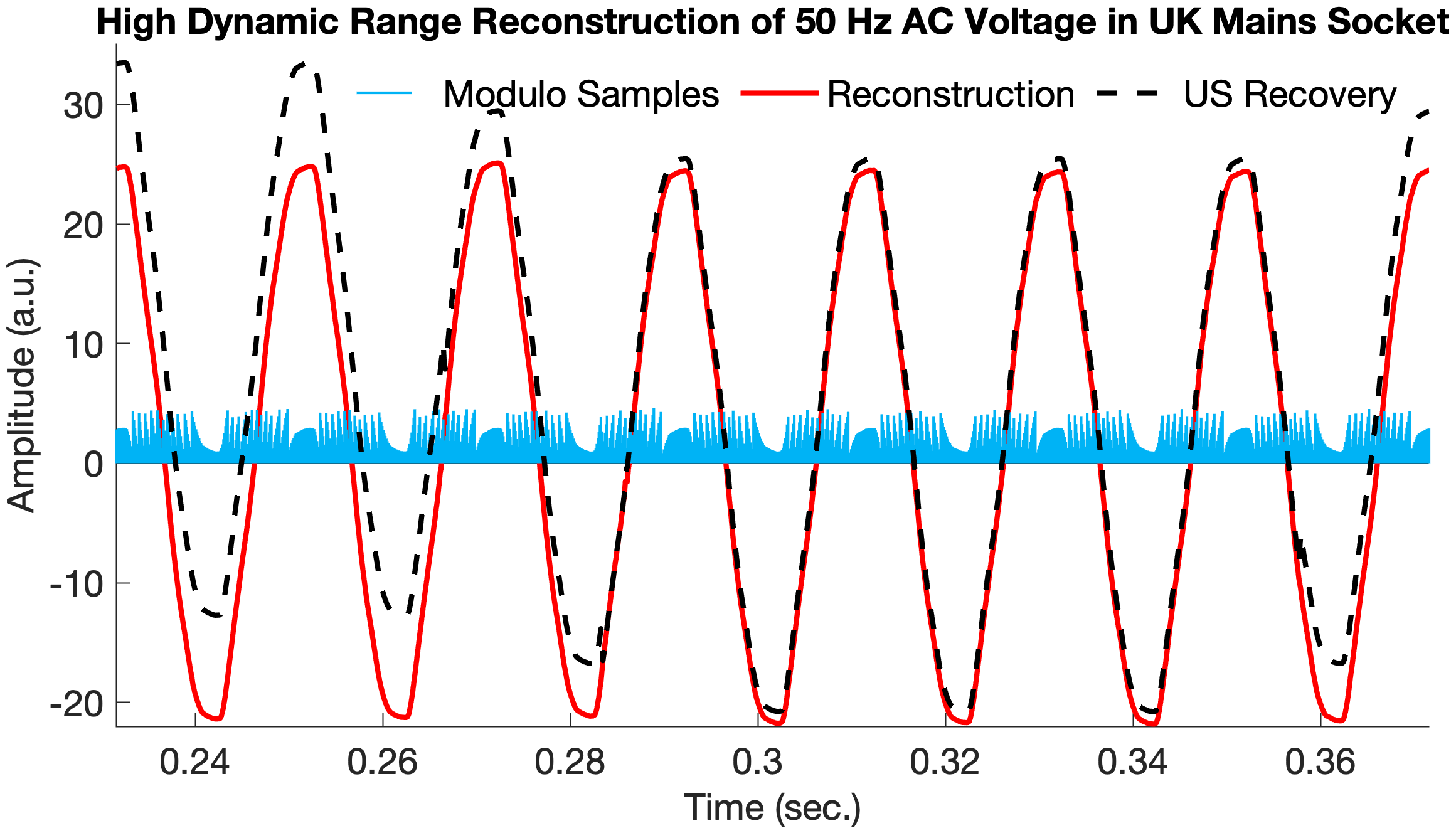}
\caption{Experiment 5a: High Dynamic Range voltage reconstruction of $\approx \vpp{48}$ UK mains alternating current ($50$ Hz). }
\label{fig:ExpEa}
\end{figure}

\medskip
\noindent {\bf Experiment 5: High Dynamic Range Reconstruction.} 

\noindent In the two experiments that follow, our goal is to push our hardware and the algorithm to its limits so that signals with amplitudes as large as $24\lambda$  and folds as large as $M = 161$, can be reconstructed. Such HDR inputs are likely to amplify deviations and non-idealities in the US-ADC prototype and hence, the experiments serve as an edge case test for our recovery algorithm. 

\medskip
\noindent  (a) {\bf Uncalibrated Example $\rob{\approx \vpp{48}}$.} In this case, we use our hardware's internal ADC to sample the waveform and hence, we do not have access to the ground truth. That said, the signal of interest is the alternating current drawn from the UK mains power socket with a frequency $\approx50$ Hz. Out of $K = 1291$ samples are sampled at $T=108.488$ $\mu\mathrm{s}$. Due to the high dynamic range, we estimate $M = 161$ folding instants. Our reconstruction approach gives a reasonable reconstruction and to check this, we observe the Fourier spectrum of the reconstructed signal which shows a spike at $50.018$ Hz. The modulo samples and the corresponding reconstruction is shown in \fig{fig:ExpEa}. We also show that recovery using unlimited sampling method fails due to non-idealities. {We find it remarkable that our approach yields and accurate reconstruction despite the fairly large number of folds $(M = 161)$.  Our explanation for this performance is the interplay between accurate modeling, exact knowledge of $M$, and oversampling that avoids algorithmic challenges.}

\medskip

\noindent (b) {\bf  Calibrated Example $\rob{\approx \vpp{58}}$.} To establish that our recovery approach can indeed handle HDR signals, we repeat the experiment with access to the ground truth as we use the oscilloscope's built-in ADC. 
The corresponding waveforms and reconstruction are plotted in \fig{fig:ExpEb}(a). 
Due to the HDR swing of the input signal, non-ideal jumps are observed in the measurements and have been annotated in \fig{fig:ExpEb}(b). 
The non-ideal jumps result in sub-optimal reconstruction when using the unlimited sampling method but the performance can be enhanced by optimizing $\lambda$.
The results tabulated in Table~\ref{tab:1} yet again show the effectivity of our approach.

\begin{figure}[!t]
\centering
\includegraphics[width = 0.65\columnwidth]{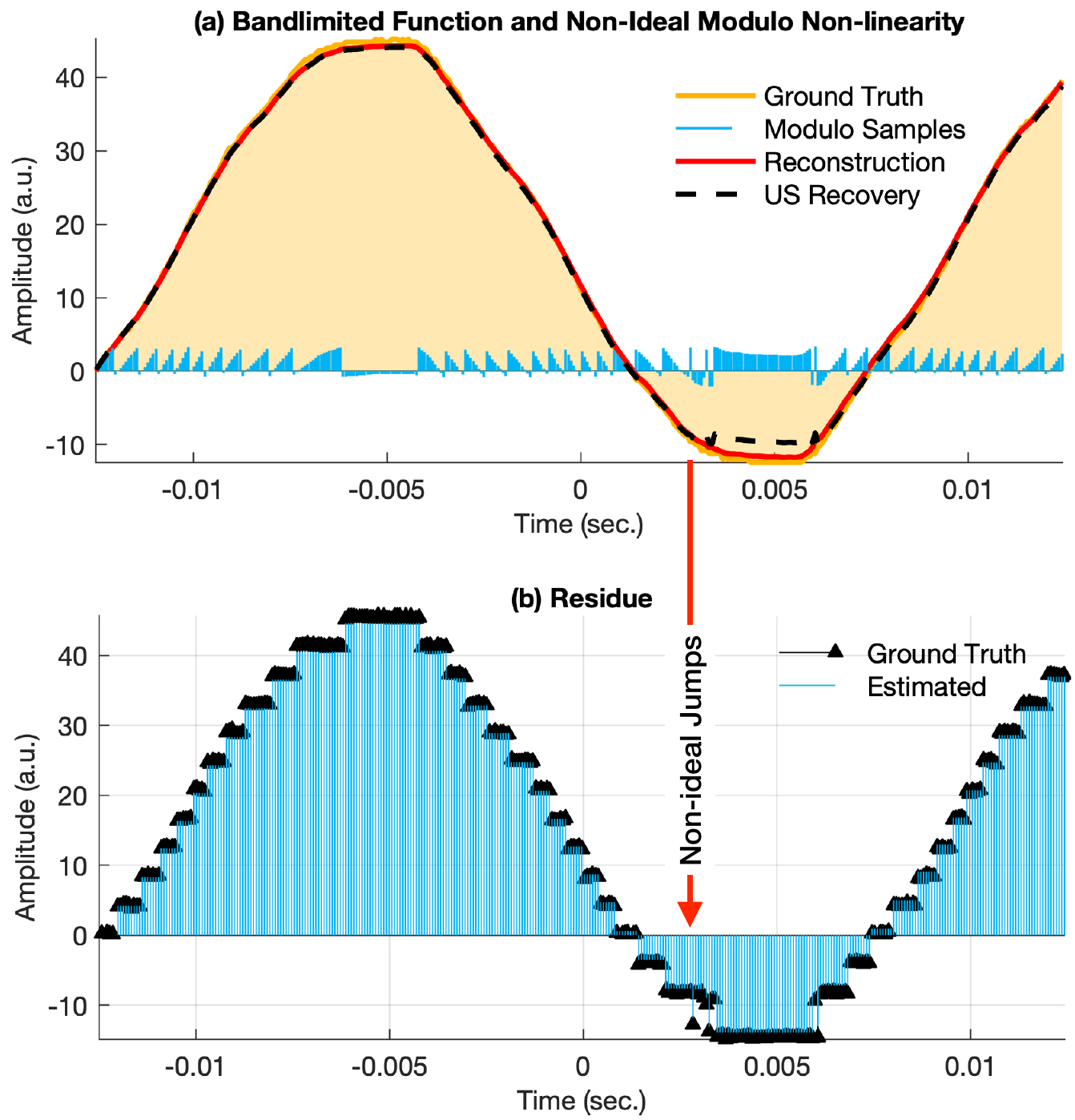}
\caption{Experiment 5b: HDR voltage reconstruction of $\approx \vpp{58}$ signal. }
\label{fig:ExpEb}
 \end{figure}

\section{Conclusions and Take-Home Message}
In our previous works on unlimited sampling \cite{Bhandari:2017b,Bhandari:2020g,Bhandari:2020f}, we studied new acquisition protocols and recovery algorithms for high-dynamic-range sensing based on modulo measurements. 	In this paper, we revisited this problem and proposed a novel solution approach, which, in contrast to the first works, is robust to non-idealities, as we observed them in experiments with a hardware prototype that we developed.

Our new algorithm is designed for finite-dimensional, folded signals and is agnostic to the sensing threshold $\lambda$. The main insight behind our approach is that the folds introduced by the modulo non-linearities can be isolated in the Fourier domain, which gives rise to a frequency estimation problem. For recovery, we rely on high resolution spectral estimation methods. This allows us to deal with arbitrarily close folding instants. At the cross-roads of theory and practice, our work raises interesting questions for future research. 
\begin{enumerate}[leftmargin = *,label = $\bullet$]
  \item We currently assume that the number of folds is known. We find it very interesting to explore whether this number can be bounded in terms of function parameters such as the amplitude. Alternatively, developing a robust criterion for estimating the same from data would benefit the recovery procedure. 
  
  \item Although we have presented empirical results based on experiments with a hardware prototype, our analysis does not yet consider the case of noise for the Fourier domain approach.  This remains an interesting pursuit to complement our guarantees.
  
  \item At the core of the recovery procedure designed in this paper is a spectral estimation problem \cite{Hua:1990}. We expect that future advances for this problem will also have interesting implications for the problem of reconstruction from modulo measurements. In particular, the limitations of current approaches for this problem in terms of the number of spikes that can be recovered will also directly translate into limitations of the approach presented in this paper. Also viewing the problem from the perspective of \textit{super-resolution} \cite{Donoho:1992} may yield additional insights and solution strategies.
  
  \item As an alternative way to overcome these limitations, in future work we aim to investigate hybrid methods that use the Fourier domain approach only for spikes that correspond to non-idealities and combine it with the original unlimited sampling method for the other spikes. 
\end{enumerate}

\ifCLASSOPTIONcaptionsoff
\newpage
\fi

\bibliographystyle{IEEEtran_url}

\end{document}